\documentclass[10pt]{article}

\usepackage{amsfonts}
\usepackage{bm}
\usepackage{graphicx,subfigure}
\usepackage{epstopdf}
\usepackage{epsfig} 
\usepackage{mathptmx} 
\usepackage{mathrsfs}
\usepackage{amsthm}
\usepackage{amsmath,amssymb} 
\usepackage{cite}
\usepackage{color}
\usepackage{array}
\usepackage{floatflt}
\usepackage{lipsum}
\usepackage{amsfonts} 
\usepackage{lastpage}
\usepackage{latexsym}
\usepackage{booktabs}

\theoremstyle{plain}
\newtheorem{problem}{Problem}
\newtheorem{lemma}{Lemma}
\newtheorem{remark}{Remark}
\newtheorem{theorem}{Theorem}

\newcommand{\argmax}{\operatornamewithlimits{argmax}}

\usepackage[linesnumbered,ruled]{algorithm2e}
\usepackage[noend]{algorithmic}

\begin{document}

\title{Active Requirement Mining of Bounded-Time Temporal Properties of Cyber-Physical Systems}
\author{Gang Chen \and Zachary Sabato \and Zhaodan Kong}

\maketitle

\begin{abstract}
This paper uses active learning to solve the problem of mining bounded-time signal temporal requirements of cyber-physical systems or simply the requirement mining problem. By utilizing robustness degree, we formulates the requirement mining problem into two optimization problems, a parameter synthesis problem and a falsification problem. We then propose a new active learning algorithm called Gaussian Process Adaptive Confidence Bound (GP-ACB) to help solving the falsification problem. We show theoretically that the GP-ACB algorithm has a lower regret bound thus a larger convergence rate than some existing active learning algorithms, such as GP-UCB. We finally illustrate and apply our requirement mining algorithm on two case studies, the Ackley's function and a real world automatic transmission model. The case studies show that our mining algorithm with GP-ACB outperforms others, such as those based on Nelder-Mead, by an average of 30\% to 40\%. Our results demonstrate that there is a principled and efficient way of extracting requirements for complex cyber-physical systems. 
\end{abstract}

\section{Introduction}

In this paper, we propose the use of active learning for mining bounded-time signal temporal logic (STL) requirements of cyber-physical systems (CPSs). CPSs are characterized by the tight interaction of a collection of digital computing devices (the Cyber part) and a continuous-time dynamical system (the Physical part) \cite{lee2011introduction,alur2015CPS}. They are used for modeling in many safety-critical domains, such as automotive, medical, and aerospace industries, where the correctness of the end product is of significant importance \cite{jin2014powertrain,rajkumar2010cyber,jiang2012cyber}. However, in many industrial settings, the requirements intended to enforce the correctness guarantee are vague and in many cases expressed in natural languages, such as ``smooth steering'' and ``good fuel efficiency.'' This prevents the application of formal verification tools, such as Statistical Model Checking (SMC) \cite{zuliani2010bayesian}. Further, due to the complex nature of many CPSs, writing down the appropriate requirements to reflect the desirable system properties can be challenging even for experienced designers.

Given a system $S$, e.g., a Stateflow/Simulink model of a steering system, and a requirement template $\varphi$ with a set of unknown parameters $\theta$, the goal of this paper is to develop an improved method which can automatically infer a requirement $\varphi_{\theta}$ (often called a specification in formal methods literature) written in signal temporal logic, i.e., the system $S$ satisfies the requirement $\varphi_{\theta}$. Such a problem can be called either a \emph{requirement mining} problem or a \emph{temporal logic inference} problem. In the past few years with the introduction of the concept of \emph{robustness degree} \cite{donze2010robust,fainekos2009robustness}, the problem has received increased attention and has achieved significant progress. Robustness degree quantifies how strongly a given sample trajectory $s$ of a system $S$ exhibits a temporal logic property $\varphi_{\theta}$ as a real number rather than just providing a yes or no answer. A robustness degree of $r(s,\varphi_{\theta})$ means that the trajectory $s$ can tolerate perturbations up to the size of $|r(s,\varphi_{\theta})|$ and still maintain its current Boolean truth value with respect to $\varphi_{\theta}$.

With the help of robustness degree, the requirement mining problem of a CPS can be converted to an optimization problem for the expected robustness. Various techniques, such as particle swarm optimization \cite{haghighi2015spatel}, simulated annealing \cite{kong2014temporal}, and stochastic gradient descent algorithm \cite{kong2016TAC} can then be used to solve the optimization problem\footnote{Other optimization algorithms that have been used to solve similar optimization problems related to verification/falsification include rapidly exploring random tree (RRT) \cite{dreossi2015efficient}, cross-entropy method \cite{sankaranarayanan2012falsification}, Tabu search \cite{abbas2011linear}, and Ant Colony Optimization (ACO) \cite{annpureddy2011s}.}. Based on the nature of the search space, the requirement mining problems can be classified into two categories, \emph{parameter estimation} problems and \emph{structural inference} problems. In the former case, the structure of the requirement is given but with some unknown parameter values, for example, in a vehicular application, a requirement structure may be known as $F_{[0,\tau]} (v > \pi)$, which means that ``eventually between time 0 and some unspecified time $\tau$, the speed $v$ is greater than some unspecified value $\pi$.'' The goal of the parameter estimation problem is to identify appropriate values for the parameters \cite{bartocci2014data,jin2013mining}. In the latter example, even the structure of the requirement is unspecified. The structural inference problem is inherently hard. Candidate formulas have to be restricted to a certain template with which a partial order is well defined \cite{kong2014temporal,kong2016TAC,jones2014anomaly}. This paper focuses on the parameter estimation problem, but there are significant implications for structural inference as well.

Due to the hybrid nature of CPSs, the robustness degree function can be highly nonlinear. Furthermore, many CPSs are stochastic because of various uncertainties inherent to the system and its mathematical model. These complexities added together makes uniform sampling method inappropriate to solve the optimization problem (there are even cases in which the exact probability distribution over the parameter space is unknown a priori). Monte-Carlo techniques have been shown to be an effective sampling method to tackle the issue \cite{abbas2013probabilistic,jin2013mining}. It is worth pointing out that these techniques may suffer from slow convergence, meaning that the inference procedure may take a long time. In many applications, a quick verdict is needed, consider for instance an online diagnosis of a faulty safety-critical system.

In this paper, we use active learning to partly mitigate the need for a large number of iterations during optimization. The idea behind active learning is to accelerate convergence by actively selecting potentially ``informative'' samples, in contrast with random sampling from a predefined distribution \cite{Settles2010,ramdas2013algorithmic}. Active learning can be explained with a supervised video classification problem. With passive learning, a large amount of labeled data is needed for classification, e.g., cat videos vs. non cat videos. Labeling itself involves an oracle, e.g., a human, and it might be time consuming. Further, much of the data may be redundant, not contributing much to the classification. With active learning, based on the current information gathered about the data distribution, the learning algorithm only asks for the label of the most informative data. In our setting, the CPS model together with a stochastic model checker will serve as an oracle, providing the learner an approximate expected robustness degree based on the current most informative parameter. 

\paragraph{Contributions} The \emph{main} contribution of this paper is an active learning based scheme for requirement mining of cyber-physical systems. It unifies two complementary camps of requirement mining and verification philosophies: one is model based \cite{annpureddy2011s,abbas2013probabilistic,sankaranarayanan2012falsification,jin2013mining,asarin2012parametric}, and the other is data driven \cite{jones2014anomaly,kong2014temporal,bartocci2014data}. In our method, models are used as oracles, generating data which enables our method to gain knowledge of the system.  This helps focusing ongoing searches in promising parameter ranges, and thus eliminating unnecessary samples. The philosophy of our paper is similar to simulation based optimization \cite{gosavi2015simulation} but with a particular focus on utilizing machine learning techniques. \emph{Second}, instead of using existing learning algorithms, we develop a new active learning algorithm called Gaussian Process Adaptive Confidence Bound (GP-ACB). We prove that our GP-ACB algorithm converges faster than Gaussian Process Upper Confidence Bound (GP-UCP) algorithm \cite{srinivas2012information}, a state-of-the-art active learning algorithm. \emph{Third}, we integrate our GP-ACB algorithm with an existing verification tool, called Breach \cite{donze2010breach}. Using an automatic transmission controller as an example system, we show that the efficiency of Breach can be improved by as much as 40\%.

\paragraph{Organization} This paper is subdivided into the following sections. Section \ref{math} discusses the relevant background on signal temporal logic and Gaussian processes. Section \ref{problemForm} formally defines the requirement mining problem and shows how to transform the problem into an optimization problem with the help of robustness degree. Section \ref{active_learning} discusses our GP-ACB algorithm. Section \ref{caseStudies} provides two case studies to demonstrate our algorithm, an academic example with Ackley's function as the target function and an automotive example. Section \ref{conclusion} concludes the paper.

\section{Preliminaries}
\label{math}

\subsection{Signal Temporal Logic}
\label{sec:STL}
Given two sets $A$ and $B$, $\mathcal{F}(A,B)$ denotes the set of all functions from $A$ to $B$. Given a time domain $\mathbb{R}^+:=[0,\infty)$, a \emph{continuous-time, continuous-valued signal} is a function $s \in \mathcal{F}(\mathbb{R}^+,\mathbb{R}^n)$. We use $s(t)$ to denote the value of signal $s$ at time $t$, and $s[t]$ to denote the suffix of signal $s$ from time $t$, i.e., $s[t] = \{s(\tau)| \tau \geq t\}$.  

\em Signal temporal logic \em (STL) \cite{maler2004monitoring} is a temporal logic defined over signals.  STL is a predicate logic with interval-based temporal semantics. The  syntax of STL is defined as
\begin{equation}
\varphi := f(s) \sim d| \neg \varphi | \varphi_1 \wedge \varphi_2 | \varphi_1 \vee \varphi_2 | F_{[a,b)} \varphi | G_{[a,b)} \varphi,
\end{equation}
where $a$ and $b$ are non-negative finite real numbers, and  $f(s) \sim d$ is a predicate where $s$ is a signal, $f \in \mathcal{F}(\mathbb{R}^n,\mathbb{R})$ is a function, $\sim \in \{<,\geq\}$, and $d \in \mathbb{R}$ is a constant. The Boolean operators $\neg$ and $\wedge$ are negation (``not'') and conjunction (``and''), respectively. The other Boolean operators are defined as usual. The temporal operators $F$ and $G$ stand for ``Finally (eventually)'' and ``Globally (always)'', respectively.

The \em semantics \em of STL is recursively defined as
\begin{equation*}
 \begin{array}{rll}
s[t] \models (f(s)\sim d) & \text{ iff } & f(s(t)) \sim d \\
s[t] \models \varphi_1  \wedge \varphi_2  & \text{ iff } & s[t] \models \varphi_1 
\text{ and } s[t] \models \varphi_2 \\
s[t] \models \varphi_1  \vee \varphi_2  & \text{ iff } & s[t] \models \varphi_1 
\text{ or } s[t] \models \varphi_2 \\
s[t] \models  G_{[a,b)} \varphi  & \text{ iff } & \forall t' \in [t+a,t+b), s[t'] \models 
\varphi\\
s[t] \models F_{[a,b)} \varphi & \text{ iff } &
\exists t'
\in [t+a,t+b), \text{s.t. } s[t'] \models 
\varphi. 
 \end{array}
\end{equation*}
In plain English, $F_{[a,b)} \varphi$ means ``within $a$ and $b$ time units in the future, $\varphi$ is true'', and $G_{[a,b)} \varphi$ means ``for all times between $a$ and $b$ time units in the future $\varphi$ is true''.

STL is equipped with a \em robustness degree \em 
\cite{fainekos2009robustness,donze2010robust} (also called ``degree of 
satisfaction'') that quantifies how well a given signal $s$ satisfies a given formula $\varphi$.  The robustness is calculated recursively as follows
\begin{equation*}
 \begin{array}{rl}
 r(s,(f(s) < d),t) & = d-f(s(t)) \\
 r(s,(f(s) \geq d),t) & = f(s(t))-d \\
 r(s,\varphi_1 \wedge \varphi_2,t) &= \min \big(r(s,\varphi_1,t),r(s,\varphi_2,t) \big) \\
 r(s,\varphi_1 \vee \varphi_2,t) &= \max \big( r(s,\varphi_1,t),r(s,\varphi_2,t) \big) \\
  r(s, G_{[a,b)} \varphi,t) & = \underset{t' \in
[t+a,t+b)}{\min} r(s,\varphi,t') \\
r(s,F_{[a,b)} \varphi,t) & = \underset{t' \in
[t+a,t+b)}{\max}
r(s,\varphi,t').
 \end{array}
\end{equation*} 
We use $r(s,\varphi)$ to denote $r(s,\varphi,0)$. If $r(s,\varphi)$ is large and positive, then $s$ would have to change by a large deviation in order to violate $\varphi$.  

\emph{Parametric signal temporal logic} (PSTL) is an extension of STL where the bound $d$ and the endpoints of the time intervals $[a,b)$ are parameters instead of constants \cite{asarin2012parametric}. We denote them as \emph{scale} parameters $\pi=[\pi_1,...,\pi_{n_{\pi}}]$ and \emph{time} parameters $\tau=[\tau_1,...,\tau_{n_{\tau}}]$, respectively. A full parameterization is given as $[\pi, \tau]$. The syntax and semantics of PSTL are the same as those of STL. A \emph{valuation} $\theta$ is a mapping that assigns real values to the parameters appearing in an PSTL formula. A valuation $\theta$ of an PSTL formula $\varphi$ induces an STL formula $\varphi_{\theta}$. For example, if $\varphi = 
F_{[\tau_1,\tau_2)} (x < \pi_1)$ and $\theta([\pi_1,\tau_1,\tau_2]) = [0,0,3]$, then $\varphi_{\theta} =  F_{[0,3)} (x 
< 0)$.

\subsection{Gaussian Processes}
\label{sec:GP}

Formally, a Gaussian process (GP) is defined as a collection of random variables, any finite linear combination of which have a joint Gaussian distribution \cite{Rasmussen2006}. A simple example of a Gaussian process is a linear regression model $f(\vec{x}) = \phi(\vec{x})^T \vec{w}$, where $\vec{x} \in \mathbb{R}^n$, $\phi \in \mathcal{F}(\mathbb{R}^n,\mathbb{R}^N)$, which maps an $n$-dimensional $\vec{x}$ to an $N$-dimensional feature space (a feature is an individual, measurable property of a function space), and $\vec{w} \sim \mathcal{N}(\vec{0}, \Sigma)$, a zero mean Gaussian with covariance matrix $\Sigma$. Any GP is completely specified by its mean function $m(\vec{x})$ and its covariance function or kernel $k(\vec{x},\vec{x'})$
\begin{equation*}
\begin{array}{ll}
m(\vec{x})=E[f(\vec{x})],\\
k(\vec{x},\vec{x'})=E[(f(\vec{x})-m(\vec{x}))(f(\vec{x'})-m(\vec{x'}))].
\end{array}
\end{equation*}
As an example, for the linear regression model,
\begin{equation*}
\begin{array}{ll}
E[f(\vec{x})] = \phi(\vec{x})^T E[\vec{w}],\\
E[f(\vec{x})f(\vec{x'})]=\phi(\vec{x})^T E[\vec{w} \vec{w}^T] \phi(\vec{x'}) = \phi(\vec{x})^T \Sigma \phi(\vec{x'}).
\end{array}
\end{equation*}


A flat (or even zero) mean function $m(\vec{x})$ is chosen in the majority of cases in the literature. Such a choice does not cause many issues since the mean of the posterior process in not confined to zero. There is a large set of available kernels $k(\vec{x},\vec{x'})$. Two common ones, which are also used in this paper, are \cite{Anwar2015}
\begin{itemize}
\item Gaussian kernel with length-scale $l>0,k(\vec{x}_1,\vec{x}_2)=exp(-|\vec{x}_1-\vec{x}_2|^2/(2 l^2))$, where $|.|$ is the Euclidean length;
\item Mat\'{e}rn kernel with length-scale $l>0$ 
\begin{equation*}
k(\vec{x}_1,\vec{x}_2)=\frac{2^{1-\nu}}{\Gamma(\nu)}\left(\frac{\sqrt{2\nu}|\vec{x}_1-\vec{x}_2|}{\textit{l}}\right) ^{\nu}K_{\nu}\left( \frac{\sqrt{2\nu}|\vec{x}_1-\vec{x}_2|}{\textit{l}}\right),
\end{equation*}
where $K_{\nu}$ is the modified Bessel function and $\nu$ is a positive parameter.
\end{itemize}


\section{Requirement Mining Problem Formulation}
\label{problemForm}

In this section, we first provide some background on our cyber-physical system models. We then formally define the requirement mining problem. Finally, we show how to formulate the requirement mining problem as an optimization problem.


\subsection{Cyber-Physical Systems}
\label{sec:system}

In this paper, we study autonomous (closed-loop) cyber-physical systems. Notations from \cite{abbas2013probabilistic,sankaranarayanan2012falsification} are adopted here. A system $S$ maps an initial condition (or uncontrolled environmental conditions, e.g., road conditions) $\vec{x}_0 \in X_0 \subset \mathbb{R}^{n_x}$ to a discrete-time output signal $\vec{y} \in \mathcal{F}([0, T], Y)$ with $Y \subset \mathbb{R}^{n_y}$ and $T$ as the finite maximal simulation time. We assume both $X_0$ and $Y$ can be represented as the Cartesian product of intervals $[a_1, b_1] \times [a_2, b_2] \times \dots [a_n, b_n]$, where $a_i, b_i \in \mathbb{R}$.

\subsection{Problem Statement}
\label{sec:problem}

Formally, in this paper, we solve the following problem.
\begin{problem}
\label{original_problem}
Given a system $S$ with an initial condition set (or uncontrolled environmental condition set) $X_0 \subset \mathbb{R}^{n_x}$ and a parametric signal temporal logic formula $\varphi_{\theta}$ with unknown parameters $\theta \in \Theta \subset \mathbb{R}^{n_{\theta}}$, where $\Theta$ is the set of feasible valuation,  find a valuation $\theta$ such that 
\begin{equation}
\label{original requirment}
\forall \vec{x}_0 \in X_0: S(\vec{x}_0)[0] \models \varphi_{\theta}.
\end{equation}
That is to say, the output signal $\vec{y}$ of the system $S$, starting from any initial condition $\vec{x}_0 \in X_0$, satisfies $\varphi_{\theta}$ at time 0. We write $S(\vec{x}_0) \models \varphi_{\theta}$ as shorthand for $S(\vec{x}_0)[0] \models \varphi_{\theta}$.
\end{problem}


\subsection{Requirement Mining as Optimization}

Problem \ref{original_problem} is subject to the curse of dimensionality. If we solve Problem \ref{original_problem} directly, the dimension of the search space is $n_x+n_{\theta}$. In this paper, we use the following three methods to mitigate the computational complexity.
\begin{itemize}
\item To reduce the search space's dimension, we follow the idea described in \cite{jin2013mining}. We divide Problem \ref{original_problem} into two sub-problems: a parameter synthesis problem (Problem \ref{ff:synthesis_problem}) having a dimension of $n_{\theta}$ and a falsification problem (Problem \ref{falsificaton_problem}) having a dimension of $n_x$.
\item We then develop an active learning algorithm to progressively reduce the volume of the search space. The active learning algorithm focuses only on ``informative'' initial states, meaning it does not need to search the entire initial condition set $X_0$.
\item Finally, we take advantage of the monotonicity property of PSTL \cite{jin2013mining,kong2016TAC,kong2014temporal}. If a parameter has been found that solves Problem \ref{ff:synthesis_problem}, then there is no need to check for parameters that make the requirement more restrictive, as these parameters surely do not constitute an optimal solution.
\end{itemize}
The flowchart to solve Problem \ref{original_problem} is shown in Fig. 1.

\begin{figure}[!htbp]
\label{fig:Requirement mining flowchart}
\centering
\includegraphics[width=0.8\textwidth]{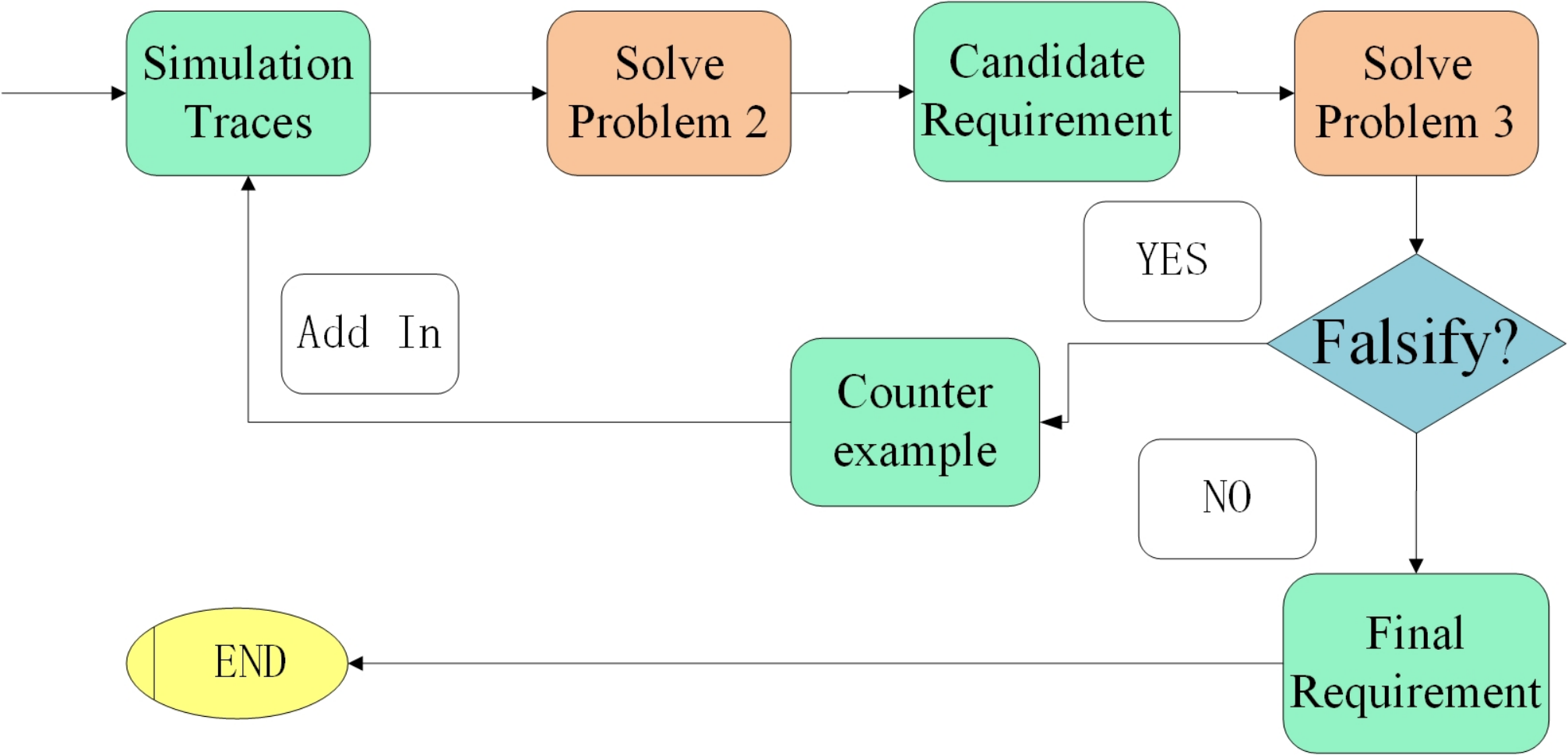}
\caption{Flowchart for solving Problem \ref{original_problem}. Our algorithm is based on the Breach toolbox \cite{donze2010breach,jin2013mining}. Our new active learning algorithm is used in the falsification step (shown in blue).}
\end{figure}

\begin{problem}
\label{ff:synthesis_problem}
\em Parameter Synthesis\em $\quad$ Given a system $S$ with a set of inputs $\bar{X}_0=\{\vec{x}_{0}^{i} \in X_0, i=1,2,\cdots,n_s\}$, where $n_s$ is the number of counter-example traces obtained from Problem \ref{falsificaton_problem}, and a PSTL formula $\varphi_{\theta}$ with unknown parameters $\theta \in \Theta \subset \mathbb{R}^{n_{\theta}}$, find a valuation $\theta$ to solve 
\begin{equation}
\max_{\theta}(0,\epsilon-\min_{\vec{x}_0 \in \bar{X}_0}(r(S(\vec{x}_0),\varphi_{\theta}))),
\end{equation}
where $\epsilon>0$ us a user-specified bound.
\end{problem}

\begin{problem}
\label{falsificaton_problem}
\em Falsification\em  $\quad$
Given a system $S$ with an initial condition set (or uncontrolled environmental condition set) $X_0 \subset \mathbb{R}^{n_x}$ and a PSTL formula $\varphi_{\theta}$ with particular parameters $\theta \in \Theta$, find an initial condition $\vec{x}_0 \in X_0$ to solve
\begin{equation}
\min_{\vec{x}_0 \in X_0}(r(S(\vec{x}_0),\varphi_{\theta})).
\end{equation}
\end{problem}

The max function $\max(0,\epsilon-\cdot)$ in Eqn. (3) is a modified hinge loss function. As the minimum of the robustness, $\min (r(S(\vec{x}_0),\varphi_{\theta}))$, is positive,  the loss function rewards values that are close to the bound 0 and at the same time positive. It is utilized here to tackle the issue related to the non-uniqueness of solutions to the requirement mining problem, as pointed out in \cite{jin2013mining}. For a particular $\theta$, the min functions in Eqn. (3) and Eqn. (4) reward initial states that lead to negative robustness degrees. Their goals are to find an initial state which leads to a trace that does not meet the requirement $\varphi_{\theta}$. The max-min function, Eqn. (3), then finds a parameter $\theta$, and, in turn, a requirement $\varphi_{\theta}$ such that for any initial state $x_0 \in X_0$, the output of the system has a positive robustness degree that is smaller than $\epsilon$. This means that the system satisfies the requirement $\varphi_{\theta}$, but only barely. Formally,

\begin{lemma}
\label{lemma problem}
A formula obtained by solving Problem 2 and Problem 3 together is a solution to Problem 1. I.e., if a formula $\varphi_{\theta^*}$, where 
\begin{equation*}
\theta^{*}=\argmax_{\theta}(0,\epsilon-\min_{\vec{x}_0}(r(S(\vec{x}_0),\varphi_{\theta})))
\end{equation*}
and furthermore $\forall \vec{x}_0 \in X_0, r(S(\vec{x}_0),\varphi_{\theta^{*}})> 0$, the statement (\ref{original requirment}) hold.
\end{lemma}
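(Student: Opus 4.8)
The plan is to notice that the characterization of $\theta^*$ as an $\argmax$ of the hinge-loss objective is not what drives the conclusion; the operative hypothesis is the side condition $\forall \vec{x}_0 \in X_0,\ r(S(\vec{x}_0),\varphi_{\theta^*}) > 0$. Granting this, statement (\ref{original requirment}) with valuation $\theta = \theta^*$ follows at once from the \emph{soundness} of the robustness degree, namely that for every signal $s$, every STL formula $\varphi$, and every time $t$,
\begin{equation*}
r(s,\varphi,t) > 0 \ \Rightarrow\ s[t] \models \varphi .
\end{equation*}
Instantiating this at $t = 0$ and $s = S(\vec{x}_0)$ for each $\vec{x}_0 \in X_0$ separately yields $S(\vec{x}_0)[0] \models \varphi_{\theta^*}$ for every $\vec{x}_0$, which is precisely (\ref{original requirment}). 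So the entire content of the lemma is the soundness implication applied pointwise over $X_0$.

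The soundness claim itself I would prove by structural induction on $\varphi$, matching the recursive definition of the robustness $r$ against the recursive definition of $\models$ from Section~\ref{sec:STL}. Because negation flips the sign of the robustness (the standard convention $r(s,\neg\varphi,t) = -r(s,\varphi,t)$), the one-sided statement is not self-sustaining under the induction, so I would strengthen the hypothesis to the two-sided form
\begin{equation*}
r(s,\varphi,t) > 0 \Rightarrow s[t] \models \varphi \quad\text{and}\quad r(s,\varphi,t) < 0 \Rightarrow s[t] \not\models \varphi
\end{equation*}
and carry both halves simultaneously. The base cases are the predicates: for $f(s) \geq d$ one has $r = f(s(t)) - d$, so $r > 0$ forces $f(s(t)) > d$, hence satisfaction, and $r < 0$ forces violation; the case $f(s) < d$ is symmetric. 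For $\wedge$ the $\min$ in the robustness recursion makes a positive value force both conjuncts positive, so both are satisfied by the inductive hypothesis, matching the ``and'' in the semantics; dually $\vee$ uses the $\max$. For $G_{[a,b)}$, a positive $\min$ over $[t+a,t+b)$ makes the robustness positive at \emph{every} instant, matching the universal quantifier; for $F_{[a,b)}$, a positive $\max$ exhibits at least one witnessing instant, matching the existential quantifier. The negation case is exactly where the strengthening is used: $r(s,\neg\varphi,t) > 0$ means $r(s,\varphi,t) < 0$, and the second half of the hypothesis delivers $s[t] \not\models \varphi$, i.e. $s[t] \models \neg\varphi$.

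The main obstacle—really the only delicate point—is the bookkeeping around negation: one must commit to the two-sided statement from the outset, since the bare positive-robustness implication is not preserved under complementation. I would also flag that the \emph{strictness} of the inequality in the lemma's hypothesis, $r(S(\vec{x}_0),\varphi_{\theta^*}) > 0$, is not cosmetic: at $r = 0$ the predicate cases become ambiguous (e.g. $f(s(t)) = d$ satisfies $f(s)\geq d$ but not $f(s) < d$), so soundness can only be asserted for strictly signed robustness, which is exactly what the side condition supplies. Everything else reduces to the elementary facts that a minimum is positive iff all its arguments are and a maximum is positive iff some argument is. Finally, it is worth a remark that this strict-positivity feasibility condition is precisely what the hinge-loss construction preceding the lemma is designed to enforce: the inner $\min$ over $\bar X_0$ penalizes falsifying initial states while the outer objective pushes the robustness to be small but still positive, so the hypothesis of the lemma is the guarantee under which the soundness argument applies.
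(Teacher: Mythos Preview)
Your argument is correct, and in fact considerably more explicit than what the paper does. The paper's own proof is a two-line contradiction: assume some $\vec{x}_0'$ with $S(\vec{x}_0')\not\models\varphi_{\theta'}$; then, invoking Problem~\ref{falsificaton_problem}, no such $\vec{x}_0'$ can exist, done. Unpacking that, the paper is implicitly using the \emph{contrapositive} of the soundness implication you state (non-satisfaction forces non-positive robustness, contradicting the hypothesis $r>0$ everywhere), but it never says so and simply leans on the cited robustness-semantics literature. Your route is the direct one: take the strict-positivity hypothesis, apply soundness pointwise, and conclude. The structural induction you supply is not in the paper at all; it is the content that the paper delegates to \cite{fainekos2009robustness,donze2010robust}. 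So the underlying fact is the same, but your decomposition is self-contained and actually identifies which hypothesis is doing the work (the side condition, not the $\argmax$), whereas the paper's phrasing via ``according to Problem~\ref{falsificaton_problem}'' reads more like an operational claim about the falsifier than a semantic one. Your remarks on why the inequality must be strict and why the two-sided induction hypothesis is needed for $\neg$ are accurate and go well beyond what the paper records.
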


\begin{proof} \renewcommand{\qedsymbol}{}
Assume there is a parameter $\theta'$ obtained by solving Problem \ref{ff:synthesis_problem} and Problem \ref{falsificaton_problem}, and $\vec{x}_{0}^{'} \in X_0: S(\vec{x}_{0}^{'})[0] \nvDash \varphi_{\theta^{'}}$. According to Problem \ref{falsificaton_problem}, a $\vec{x}_{0}^{'} \in X_0$ that meets the requirements cannot be found. Thus, Lemma \ref{lemma problem} has been proven. 
\end{proof}

The most time-consuming part of solving Problem \ref{original_problem} is the generation of a trace given an initial state $\vec{x}_0$. For the parameter synthesis problem, there is no need to generate traces. Moreover, according to \cite{jin2013mining}, since PSTL satisfies the important property of monotonicity, the solution is quite efficient, via a binary search algorithm \cite{dreossi2015efficient}, for example. The falsification problem, on the other hand, does need a large number of traces being generated. Therefore, finding a way to minimize the number of traces needed in the falsification step will benefit the requirement mining process greatly. In this paper, we solve the falsification problem with a new active learning algorithm called Gaussian Process Adaptive Confidence Bound (GP-ACB). We will show theoretically (in Section \ref{active_learning}) and empirically (in Section \ref{caseStudies}) the effectiveness of our algorithm in reducing the computational time.

\section{Active Learning Algorithm}
\label{active_learning}

In the section, we propose an active learning algorithm, called Gaussian Process Adaptive Confidence Bound (GP-ACB) algorithm, to solve the falsification problem mentioned in Section 3. It is inspired by Gaussian process upper confidence bound (GP-UCB) \cite{srinivas2012information}. We analyze the regret bound of GP-ACB and show that it can achieve the same regret bound with a higher probability than GP-UCB.

\subsection{Gaussian Process Adaptive Confidence Bound Active Learning Algorithm}


Active learning algorithms are originally developed to solve classification problems when an oracle is needed to provide labels \cite{Settles2010}. The process of obtaining labels from the oracle can be expensive in terms of both time and money. Thus, the goal of any active learning algorithm is to achieve high classification or regression accuracy by using the fewest labeled instances. The requirement mining problem suffers from a similar issue as elaborated at the end of Section \ref{problemForm}. Our oracle is a simulator, e.g., a Stateflow/Simulink model. Given the complexity of many CPS models, to obtain a trace from the simulator can be costly in time. Thus, we need to decrease the number of simulations needed to learn a formula.  

One active learning algorithm is called Gaussian process upper confidence bound (GP-UCB) \cite{srinivas2012information}. At each step $t$, it solves the following problem
\begin{equation}
\label{eq:GP-UCB}
\vec{x}_t=\argmax_{\vec{x}\in D} m_{t-1}(\vec{x})+\beta_t^{\frac{1}{2}}\sigma_{t-1}(\vec{x}),
\end{equation}
where $D$ is the search space ($D=X_0$ for the requirement mining problem), $\beta_t$ is a function of $t$ and independent of $\vec{x}$ (an example of $\beta_t$ will be given later), $m_{t-1}(.)$ and $\sigma_{t-1}(.)$ are the mean and covariance function of the Gaussian process, respectively, and $\vec{x}_t$ is the instance that will be inquired at step $t$, meaning the label of $\vec{x}_t$ will be obtained from the oracle. The GP-UCB algorithm balances the classical exploitation-exploration trade-off by combining two strategies: the first term tends to pick those points that are expected to achieve high rewards (exploitation); and the second terms tend to pick those points that are uncertain (exploration). 

The second term of Eqn. (\ref{eq:GP-UCB}) only depends on the covariance function $\sigma(\vec{x})$, which can potentially make the exploration somewhat random and inefficient. To address this problem, we propose an algorithm called Gaussian Process Adaptive Confidence Bound (GP-ACB) by adding a normalization term $\eta_m(x)$ to Eqn. (\ref{eq:GP-UCB}) as follows:
\begin{equation}
\label{gpacb}
\vec{x}_t=\argmax_{\vec{x}\in D} m_{t-1}(\vec{x})+\eta_m(\vec{x})^{\frac{1}{2}}\beta_t^{\frac{1}{2}}\sigma_{t-1}(\vec{x}),
\end{equation}
where $\eta_m(\vec{x})$ normalizes the mean $m_{t-1}(\vec{x})$ and can be written explicitly as
\begin{equation*}
\eta_m(\vec{x})=\frac{m_{t-1}(\vec{x})-\min(m_{t-1}(\vec{x}))}{\max (m_{t-1}(\vec{x}))-\min (m_{t-1}(\vec{x}))}. 
\end{equation*}
It is quite obvious that $0\leq \eta_m(\vec{x})\leq 1$. $\eta_m(\vec{x})$ serves two main purpose here. First, it acts as an adaptive factor to uncertainty and favors exploration directions associated with increasing rewards. Thus, it can improve the exploration efficiency. Second, it provides an adaptive upper quantile of the marginal posterior $P(f(\vec{x})\mid\vec{y}_{t-1})$, where $\vec{y}_{t-1}$ is a vector containing all the observations until time $t-1$. In this paper, we assume that the observation at time $t$, $y_{t} \in \mathbb{R}$, is $y_{t}=f(\vec{x})+\varepsilon_t$ with $\epsilon_t\sim \mathcal{N}(0,\sigma^2)$ and $\sigma^2$ known. Pseudocode for the GP-ACB algorithm is provided in Algorithm 1.

\begin{algorithm} \label{algorithm:GP-ACB}
 \caption{GP-ACB Algorithm}
 \KwIn{\\ Search space $D$ ($X_0$ for the requirement mining problem); \\
GP priors $m(\vec{x})_0 = 0$ and $\sigma_0$;\\ 
Kernel function $k$;\\
Maximal simulation time $T$}
\begin{algorithmic}[1]
  \FOR{$i = 1 \text{ to } T$}
  \STATE $\text{Perform Bayesian update to obtain } m_{t-1}(\vec{x}) \text{ and } \sigma_{t-1}(\vec{x})$;
  \STATE $\text{Calculate the normalization factor } \eta_m(\vec{x})$;
  \STATE $\text{Choose } \vec{x}_t=\argmax_{\vec{x}\in D} m_{t-1}(\vec{x})+\eta_m(\vec{x})^{\frac{1}{2}}\beta_t^{\frac{1}{2}}\sigma_{t-1}(\vec{x})$;
  \STATE $\text{Calculate } \vec{y}_t=f(\vec{x}_t)+\varepsilon_t \text{ with } \epsilon_t\sim \mathcal{N}(0,\sigma^2)$.
    \ENDFOR
\end{algorithmic}
\end{algorithm}

\subsection{Regret Bound of GP-ACB}

The goal of any learning algorithm can be stated as follows: given an unknown reward function $f \in \mathcal{F}(D,\mathbb{R})$, maximize the sum of rewards $\sum_{t=1}^T f(\vec{x}_t)$, which is equivalent to finding a $\vec{x}^*$ such that $\vec{x}^*=\argmax_{\vec{x}\in D} f(\vec{x})$. A concept called regret bound can be used to quantify the convergence rate of a learning algorithm \cite{Niranjan2009,Settles2010,auer2009near}. First, the instantaneous regret at time $t$ is defined as $r_t = f(\vec{x}^*)-f(\vec{x}_t)$. Then, the cumulative regret $R_T$ after $T$ rounds is the sum of instantaneous regrets $R_T=\sum_{t=1}^T r_t$. A desired property of the learning algorithm is then to guarantee $\lim_{T \rightarrow \infty} R_T/T=0$, implying the convergence to the global maximum $\vec{x}^*$. Finally, the bounds on the average regret $R_T/T$ are directly related to the convergence rate of the learning algorithm. The lower the bound is, the faster the algorithm converges. This section investigates the regret bound of the GP-ACB algorithm.

Our proofs on the regret bound follow those in \cite{Niranjan2009}. Here we only consider the cases when the search space is finite, i.e., $|D| < \infty$.

\begin{lemma}
\label{Lemma 1}
Pick $\delta \in (0,1)$ and set $\beta_t=2\log(|D|\pi_t/\delta)$, where $\sum_{t\leq 1}\pi_t^{-1}=1$, $\pi_t>0$. Then,
\begin{equation*}
|f(\vec{x})-m_{t-1}(\vec{x})|\leq\eta_m(\vec{x})^{1/2}\beta_t^{1/2}\sigma_{t-1}(\vec{x}), \forall\vec{x} \in D \text{ and } \forall t\geq 1
\end{equation*}
holds with probability $\geq 1-\delta^{\eta_m(\vec{x})}$.
\end{lemma}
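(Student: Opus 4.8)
The plan is to mirror the GP-UCB confidence argument of \cite{Niranjan2009}, inserting the extra factor $\eta_m(\vec{x})^{1/2}$ at the point where the threshold is chosen. First I would fix an arbitrary $\vec{x}\in D$ and step $t\geq 1$ and condition on the observations $\vec{y}_{t-1}$. By the defining property of the GP posterior, $f(\vec{x})$ is then Gaussian with mean $m_{t-1}(\vec{x})$ and variance $\sigma_{t-1}^2(\vec{x})$, so the normalized deviation $Z=(f(\vec{x})-m_{t-1}(\vec{x}))/\sigma_{t-1}(\vec{x})$ is standard normal. This reduces the claim to a tail estimate for a single Gaussian.

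Next I would invoke the standard tail inequality $P(|Z|>c)\leq e^{-c^2/2}$ for $c>0$, which follows from $P(Z>c)\leq \tfrac12 e^{-c^2/2}$. Choosing the threshold $c=\eta_m(\vec{x})^{1/2}\beta_t^{1/2}$ makes $c^2/2=\eta_m(\vec{x})\beta_t/2$, so the failure probability at the pair $(\vec{x},t)$ is at most $e^{-\eta_m(\vec{x})\beta_t/2}$. Substituting $\beta_t=2\log(|D|\pi_t/\delta)$ rewrites this as $\bigl(\delta/(|D|\pi_t)\bigr)^{\eta_m(\vec{x})}$. Since $\sum_{t\geq 1}\pi_t^{-1}=1$ with positive terms forces $\pi_t\geq 1$, and $|D|\geq 1$, we have $|D|\pi_t\geq 1$; combined with $0\leq\eta_m(\vec{x})\leq 1$ this gives $(|D|\pi_t)^{-\eta_m(\vec{x})}\leq 1$, and hence the pointwise failure probability is bounded by $\delta^{\eta_m(\vec{x})}$, exactly the complement of the asserted confidence level.

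The remaining task is to upgrade this pointwise estimate to a statement that holds simultaneously for all $\vec{x}\in D$ and all $t\geq 1$, and I expect this aggregation to be the main obstacle. The intended mechanism is the usual one: the $|D|$ inside $\beta_t$ is meant to absorb the union bound over the finite $D$, while $\sum_{t\geq 1}\pi_t^{-1}=1$ controls the union over $t$. The delicate point is that the exponent $\eta_m(\vec{x})$ depends on the very point being bounded. When $\eta_m(\vec{x})=1$ the factors telescope precisely as in GP-UCB (the $|D|$ cancels and the $t$-sum returns the base level $\delta$), but for $\eta_m(\vec{x})<1$ the per-point bound $\bigl(\delta/(|D|\pi_t)\bigr)^{\eta_m(\vec{x})}$ no longer factors into $\delta^{\eta_m(\vec{x})}$ times a term that is both cancellable in $|D|$ and summable in $t$. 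Accordingly, the route I would take is to keep $\vec{x}$ fixed and read the conclusion as a confidence bound of level $1-\delta^{\eta_m(\vec{x})}$ whose reported probability is itself allowed to depend on $\vec{x}$, using the pointwise estimate above; the honest uniform-over-$D$ version, in which the exponent forces the weaker level $1-\delta$ once some $\eta_m(\vec{x})<1$, I would isolate in a remark. Making the quantitative comparison with GP-UCB rigorous hinges entirely on how this $\vec{x}$-dependence of the exponent is handled in the union step.
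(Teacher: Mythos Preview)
Your approach is essentially the paper's: condition on $\vec{y}_{t-1}$ so that $f(\vec{x})\sim\mathcal{N}(m_{t-1}(\vec{x}),\sigma_{t-1}^2(\vec{x}))$, apply the Gaussian tail estimate $P(|Z|>c)\leq e^{-c^2/2}$ with $c=\eta_m(\vec{x})^{1/2}\beta_t^{1/2}$, and substitute $\beta_t=2\log(|D|\pi_t/\delta)$ to turn the exponential into $(\delta/(|D|\pi_t))^{\eta_m(\vec{x})}$.

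The one place your route diverges is the aggregation. The paper performs a union bound over $D$ to reach failure probability $\leq |D|\,e^{-\eta_m(\vec{x})\beta_t/2}$ and then writes ``Choosing $|D|e^{-\eta_m(\vec{x})\beta_t/2}=\delta/\pi_t$'' before summing over $t$; this identity is exact only when $\eta_m(\vec{x})=1$. You instead stay pointwise and discard the factor $(|D|\pi_t)^{-\eta_m(\vec{x})}$ via the observation $|D|\pi_t\geq 1$ and $\eta_m(\vec{x})\in[0,1]$, landing directly on $\delta^{\eta_m(\vec{x})}$. Your bookkeeping is the cleaner of the two at this step. Your worry about upgrading to a genuinely uniform-in-$\vec{x}$-and-$t$ statement is well placed: the paper's proof does not resolve it either, and the lemma's conclusion---a probability $1-\delta^{\eta_m(\vec{x})}$ that still depends on $\vec{x}$ after a ``$\forall\vec{x}\in D$''---carries exactly the ambiguity you isolate. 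Reading the lemma as a per-$\vec{x}$ confidence statement, as you propose, is the only internally consistent interpretation, and is also how it is effectively used downstream (Lemma~\ref{Lemma 3} and Theorem~\ref{theorem 1} replace $\eta_m(\vec{x})$ by the extremes $m,n$ over the realized sequence).
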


\begin{proof} \renewcommand{\qedsymbol}{}
For $\vec{x} \in D$ and $t\geq 1$. It is known that conditioned on $\vec{y}_{t-1}=(y_1,\cdots,y_{t-1})$, $\{\vec{x}_1,\cdots,\vec{x}_{t-1}\}$ are deterministic. Further, $f(\vec{x})\backsim\mathcal{N}(m_{t-1}(\vec{x}),\sigma_{t-1}^{2}(\vec{x}))$. Now if $r\backsim \textsl{N}(0,1)$, then
\begin{equation*}
\begin{array}{lr}
Pr{\lbrace r>c\rbrace}=e^{-c^{2}/2}(2\pi)^{-1/2}\int e^{-(\textsl{l}-c)^{2}/2-c(\textsl{l}-c)}dr\\
\leq e^{-c^{2}/2}Pr\lbrace r>0\rbrace=(1/2)e^{-c^{2}/2}.
\end{array}
\end{equation*}
for $c>0$, as $e^{-c(r-c)}\leq1$ for $r \geq c$. We have $Pr\lbrace|f(\vec{x})-m_{t-1}(\vec{x})|>\eta_m(\vec{x})\beta_t^{1/2}\sigma_{t-1}(\vec{x})\rbrace\leq e^{-\eta_m(\vec{x})\beta_t/2}$. Set $r=(f(\vec{x})-m_{t-1}(\vec{x}))/\sigma_{t-1}(\vec{x})$ and $c=\eta_m(\vec{x})^{1/2}\beta_t^{1/2}$. After applying the adaptive bound, we have
\begin{equation*}
|f(\vec{x})-m_{t-1}(\vec{x})|\leq\eta_m(\vec{x})^{1/2}\beta_t^{1/2}\sigma_{t-1}(\vec{x})  \quad\forall \vec{x} \in D
\end{equation*}
holds with probability $ \geq 1- |D|e^{-\eta_m(\vec{x})\beta_t/2}$. Choosing $|D|e^{-\eta_m(\vec{x})\beta_t/2}=\delta/\pi_t$, e.g., with $\pi_t=\pi^2t^2/6$, and using the adaptive bound for $t \in \mathbb{N}$, the statement holds.
\end{proof}

\begin{lemma}
\label{Lemma 2}
Fix $t\geq 1$, if $|f(\vec{x})-m_{t-1}(\vec{x})|\leq\eta_m(\vec{x})^{1/2}\beta_t^{1/2}\sigma_{t-1}(\vec{x})$, $\forall \vec{x} \in D$, then the regret $r_t$ is bounded by $2\beta_t^{1/2}\sigma_{t-1}(\vec{x}_t)$.
\end{lemma}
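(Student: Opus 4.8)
The plan is to reproduce the standard GP-UCB regret-decomposition argument, adapted to the $\vec{x}$-dependent confidence width introduced by $\eta_m$. The structural observation that makes everything work is that the exploration bonus appearing in the GP-ACB acquisition rule (\ref{gpacb}), namely $\phi_t(\vec{x}) := \eta_m(\vec{x})^{1/2}\beta_t^{1/2}\sigma_{t-1}(\vec{x})$, is \emph{exactly} the half-width of the confidence interval assumed in the hypothesis. Because the acquisition function and the confidence band share the same width, the cross terms will telescope cleanly, just as in GP-UCB. Note also that Lemma \ref{Lemma 2} is a purely deterministic (conditional) implication: the confidence inequality is taken as a hypothesis, so no probability needs to be tracked here — that was already handled in Lemma \ref{Lemma 1}.

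First I would invoke the optimality of $\vec{x}_t$ for the acquisition rule: since $\vec{x}_t = \argmax_{\vec{x}\in D} m_{t-1}(\vec{x}) + \phi_t(\vec{x})$, in particular $m_{t-1}(\vec{x}_t) + \phi_t(\vec{x}_t) \geq m_{t-1}(\vec{x}^*) + \phi_t(\vec{x}^*)$, where $\vec{x}^* = \argmax_{\vec{x}\in D} f(\vec{x})$. Next I would apply the hypothesized confidence bound twice: at $\vec{x}^*$ to upper-bound $f(\vec{x}^*) \leq m_{t-1}(\vec{x}^*) + \phi_t(\vec{x}^*)$, and at $\vec{x}_t$ to lower-bound $f(\vec{x}_t) \geq m_{t-1}(\vec{x}_t) - \phi_t(\vec{x}_t)$. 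Chaining the first of these through the acquisition inequality gives $f(\vec{x}^*) \leq m_{t-1}(\vec{x}_t) + \phi_t(\vec{x}_t)$. Subtracting the lower bound on $f(\vec{x}_t)$ cancels the common $m_{t-1}(\vec{x}_t)$ term and yields $r_t = f(\vec{x}^*) - f(\vec{x}_t) \leq 2\phi_t(\vec{x}_t) = 2\eta_m(\vec{x}_t)^{1/2}\beta_t^{1/2}\sigma_{t-1}(\vec{x}_t)$.

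Finally I would loosen this to the stated form. Since $0 \leq \eta_m(\vec{x}) \leq 1$ (established right after its definition), we have $\eta_m(\vec{x}_t)^{1/2} \leq 1$, whence $r_t \leq 2\beta_t^{1/2}\sigma_{t-1}(\vec{x}_t)$, which is the claim and matches the GP-UCB bound. I do not expect a genuine obstacle: the argument is a three-line sandwich. The only point demanding care is checking that the bonus in (\ref{gpacb}) and the half-width in the Lemma \ref{Lemma 1} hypothesis are identical functions of $\vec{x}$ — were they not, the $m_{t-1}(\vec{x}_t)$ terms would fail to cancel and the telescoping would break. Retaining the tighter intermediate bound $2\eta_m(\vec{x}_t)^{1/2}\beta_t^{1/2}\sigma_{t-1}(\vec{x}_t)$ is also worth flagging, since it is this $\eta_m$ factor (together with the higher success probability from Lemma \ref{Lemma 1}) that ultimately underlies GP-ACB's claimed improvement over GP-UCB.
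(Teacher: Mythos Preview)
Your proposal is correct and follows essentially the same approach as the paper: use the optimality of $\vec{x}_t$ for the acquisition function to chain $f(\vec{x}^*)\le m_{t-1}(\vec{x}^*)+\phi_t(\vec{x}^*)\le m_{t-1}(\vec{x}_t)+\phi_t(\vec{x}_t)$, then apply the confidence bound at $\vec{x}_t$ and finish with $\eta_m(\vec{x}_t)^{1/2}\le 1$. The paper likewise records the intermediate bound $2\eta_m(\vec{x}_t)^{1/2}\beta_t^{1/2}\sigma_{t-1}(\vec{x}_t)$ before loosening it.
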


\begin{proof} \renewcommand{\qedsymbol}{}
According to the definition of $\vec{x}^{*}$, $m_{t-1}(\vec{x}_t)+\eta_m(\vec{x}_t)^{1/2}\beta_t^{1/2}\sigma_{t-1}(\vec{x}_t)$ \\ $\geq m_{t-1}(\vec{x}^{*})+ \eta_m(\vec{x}^{*})^{1/2}\beta_t^{1/2}\sigma_{t-1}(\vec{x}^{*})\geq f(\vec{x}^{*})$. Therefore, the instantaneous regret 
\begin{equation*}
\begin{array}{lll}
r_t=f(\vec{x}^{*})-f(\vec{x}_t)\\
\leq \eta_m(\vec{x}_t)^{1/2}\beta_t^{1/2}\sigma_{t-1}(\vec{x}_t)+m_{t-1}(\vec{x}_t)-f(\vec{x}_t)\\
\leq 2\eta_m(\vec{x}_t)^{1/2}\beta_t^{1/2}\sigma_{t-1}(\vec{x}_t)\leq 2\beta_t^{1/2}\sigma_{t-1}(\vec{x}_t)
\end{array}
\end{equation*}
\end{proof}

\begin{lemma}
\label{Lemma 3}
Pick $\delta \in (0,1)$ and set $\beta_t=2\log(\pi_t/\delta)$, where $\sum_{t\geq 1}\pi_t^{-1}=1, \pi_t>0$. Then,
\begin{equation*}
|f(\vec{x})-m_{t-1}(\vec{x})|\leq\eta_m(\vec{x})^{1/2}\beta_t^{1/2}\sigma_{t-1}(\vec{x})  \quad \forall t\geq 1
\end{equation*}
holds with probability $\leq 1-\delta^{\eta_m(\vec{x}_t)}$.
\end{lemma}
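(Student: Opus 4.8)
The plan is to mirror the argument of Lemma~\ref{Lemma 1}, with the single essential change that the union bound over the finite set $D$ is dropped, since the new choice $\beta_t = 2\log(\pi_t/\delta)$ no longer carries the factor $|D|$. Concretely, I would fix $t \geq 1$ and condition on the observations $\vec{y}_{t-1}$, so that $f(\vec{x}_t) \sim \mathcal{N}(m_{t-1}(\vec{x}_t), \sigma_{t-1}^2(\vec{x}_t))$ exactly as in the earlier lemma. Standardizing by setting $r = (f(\vec{x}_t) - m_{t-1}(\vec{x}_t))/\sigma_{t-1}(\vec{x}_t) \sim \mathcal{N}(0,1)$ and $c = \eta_m(\vec{x}_t)^{1/2}\beta_t^{1/2}$, the event in the statement is exactly $\{|r| \leq c\}$, so that estimating its probability reduces to a one-dimensional Gaussian tail computation at the \emph{single} selected point $\vec{x}_t$.

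The second step is the key arithmetic identity. With $\beta_t = 2\log(\pi_t/\delta)$ one has $c^2 = \eta_m(\vec{x}_t)\beta_t = 2\eta_m(\vec{x}_t)\log(\pi_t/\delta)$, hence $e^{-c^2/2} = (\pi_t/\delta)^{-\eta_m(\vec{x}_t)} = \delta^{\eta_m(\vec{x}_t)}\,\pi_t^{-\eta_m(\vec{x}_t)}$. This is where the exponent $\eta_m(\vec{x}_t)$ on $\delta$ appears: it is the normalization term $\eta_m$, rather than a bare $\beta_t$, that converts the tail scale $e^{-c^2/2}$ into a power of $\delta$. I would then insert this identity into the Gaussian tail estimate and, as in Lemma~\ref{Lemma 1}, invoke the normalization $\sum_{t \geq 1}\pi_t^{-1} = 1$ to aggregate the per-round estimates into a statement uniform over $t$.

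The main obstacle is the direction of the probability inequality. Lemma~\ref{Lemma 1} uses the upper tail bound $\Pr\{r > c\} \leq \tfrac{1}{2}e^{-c^2/2}$, which yields a \emph{lower} bound on the probability that the confidence inequality holds; to obtain the \emph{upper} bound $\leq 1 - \delta^{\eta_m(\vec{x}_t)}$ claimed here, one instead needs a matching lower bound on the failure probability $\Pr\{|r| > c\}$, i.e. a reverse (Mills-ratio type) Gaussian tail inequality. I would therefore supply the complementary estimate $\Pr\{|r| > c\} \geq \delta^{\eta_m(\vec{x}_t)}$, and the delicate point is absorbing the stray factor $\pi_t^{-\eta_m(\vec{x}_t)}$ so that the clean bound $\delta^{\eta_m(\vec{x}_t)}$ survives. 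This contrast with Lemma~\ref{Lemma 1} — the same confidence event, but probability bounded from the opposite side once the $|D|$ union-bound slack is removed — is precisely what is meant to make the GP-ACB guarantee of Lemma~\ref{Lemma 1} stronger than the baseline quantified here, supporting the paper's claim that GP-ACB attains the regret bound with higher probability.
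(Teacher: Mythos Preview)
Your first two steps---conditioning on $\vec{y}_{t-1}$ so that $f(\vec{x}_t)\sim\mathcal{N}(m_{t-1}(\vec{x}_t),\sigma_{t-1}^2(\vec{x}_t))$, standardizing, and computing $e^{-c^2/2}=(\delta/\pi_t)^{\eta_m(\vec{x}_t)}$---are exactly what the paper does. The paper's proof is a near-verbatim replay of Lemma~\ref{Lemma 1} at the single point $\vec{x}_t$, invoking the same upper tail bound $\Pr\{|r|>c\}\leq e^{-c^2/2}$ and then substituting $e^{-\beta_t/2}=\delta/\pi_t$.

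Where you diverge is the third step. You take the ``$\leq 1-\delta^{\eta_m(\vec{x}_t)}$'' in the statement at face value and conclude that a \emph{reverse} (lower) Gaussian tail bound is required. The paper does no such thing: it never invokes a Mills-ratio lower bound, and its argument only yields $\Pr\{|f(\vec{x}_t)-m_{t-1}(\vec{x}_t)|\leq c\}\geq 1-(\delta/\pi_t)^{\eta_m(\vec{x}_t)}\geq 1-\delta^{\eta_m(\vec{x}_t)}$ (using $\pi_t\geq 1$). The direction ``$\leq$'' in the lemma statement is almost certainly a typo for ``$\geq$''; this is confirmed by the proof of Theorem~\ref{theorem 1}, which explicitly uses Lemma~\ref{Lemma 3} in the form ``holds with probability $\geq 1-\delta^{\eta_m(\vec{x}_t)}\geq 1-\delta^m$.'' So the ``main obstacle'' you identify is not one the paper confronts, and the reverse-tail machinery you propose is not needed (and would in fact be proving a different, and less useful, statement). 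Drop that step and your argument matches the paper's.
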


\begin{proof} \renewcommand{\qedsymbol}{}
For $\vec{x}\in D$ and $t\geq 1$. Conditioned on $\textbf{y}_{t-1}=\lbrace y_1,\cdots,y_{t-1}\rbrace,\lbrace\vec{x}_1,\cdots,\vec{x}_{t-1}\rbrace$ are deterministic. Further, $f(\vec{x}_t)\sim \mathcal{N}(m_{t-1}(\vec{x}),\sigma_{t-1}^2(\vec{x}))$. According to Lemma \ref{Lemma 1}, $Pr\lbrace|f(\vec{x}_t)-m_{t-1}(\vec{x}_t)|>\eta_m(\vec{x}_t)^{1/2}\beta_t^{1/2}\sigma_{t-1}(\vec{x}_t)\rbrace\leq e^{-\eta_m(\vec{x}_t)\beta_t/2}$. Since $e^{-\beta_t/2}=\delta/\pi_t$, and with the adaptive bound for $t \in \mathbb{N}$, the statement holds.
\end{proof}

\begin{lemma}
\label{Lemma 4}
Set $L_t=max(m_t(\vec{x}))-min(m_t(\vec{x}))$, $\forall \vec{x}\in D$, and let $\beta_t$ be defined as in Lemma \ref{Lemma 3}, then 
\begin{equation*}
1-\eta_m(\vec{x}_t)^{1/2}\leq \beta_t^{1/2}\sigma_{t-1}(\vec{x}_t)/L_t  \quad \forall t\geq 1
\end{equation*}
\end{lemma}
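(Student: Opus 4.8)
The plan is to use the single place where the GP-ACB rule enters, namely that $\vec{x}_t$ is the $\argmax$ over $D$ of the acquisition function $m_{t-1}(\vec{x})+\eta_m(\vec{x})^{1/2}\beta_t^{1/2}\sigma_{t-1}(\vec{x})$. First I would fix $t$ and abbreviate $M=\max_{\vec{x}\in D}m_{t-1}(\vec{x})$ and $m=\min_{\vec{x}\in D}m_{t-1}(\vec{x})$, so the range in the denominator of $\eta_m$ is $L=M-m$ (this is the quantity the lemma calls $L_t$; it must be built from the same $m_{t-1}$ that defines $\eta_m$). Let $\vec{x}^{\max}$ be a maximizer of $m_{t-1}$, so that $m_{t-1}(\vec{x}^{\max})=M$ and, directly from the definition of the normalizer, $\eta_m(\vec{x}^{\max})=1$.

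The key step is to test the acquisition function at this point $\vec{x}^{\max}$. Since $\vec{x}_t$ maximizes the acquisition,
\[
m_{t-1}(\vec{x}_t)+\eta_m(\vec{x}_t)^{1/2}\beta_t^{1/2}\sigma_{t-1}(\vec{x}_t)\;\geq\; M+\beta_t^{1/2}\sigma_{t-1}(\vec{x}^{\max})\;\geq\; M,
\]
where the last inequality merely discards the nonnegative term $\beta_t^{1/2}\sigma_{t-1}(\vec{x}^{\max})$. Rearranging gives
\[
\eta_m(\vec{x}_t)^{1/2}\beta_t^{1/2}\sigma_{t-1}(\vec{x}_t)\;\geq\; M-m_{t-1}(\vec{x}_t).
\]

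Next I would convert the right-hand side into $\eta_m$. From $\eta_m(\vec{x}_t)=\bigl(m_{t-1}(\vec{x}_t)-m\bigr)/L$ we get $m_{t-1}(\vec{x}_t)-m=L\,\eta_m(\vec{x}_t)$, hence $M-m_{t-1}(\vec{x}_t)=L-\bigl(m_{t-1}(\vec{x}_t)-m\bigr)=L\bigl(1-\eta_m(\vec{x}_t)\bigr)$. Dividing through by $L>0$ yields $1-\eta_m(\vec{x}_t)\leq \beta_t^{1/2}\sigma_{t-1}(\vec{x}_t)\,\eta_m(\vec{x}_t)^{1/2}/L$. Because $0\leq\eta_m(\vec{x}_t)\leq 1$ we have $\eta_m(\vec{x}_t)\leq\eta_m(\vec{x}_t)^{1/2}\leq 1$, so the chain
\[
1-\eta_m(\vec{x}_t)^{1/2}\;\leq\; 1-\eta_m(\vec{x}_t)\;\leq\; \frac{\eta_m(\vec{x}_t)^{1/2}\beta_t^{1/2}\sigma_{t-1}(\vec{x}_t)}{L}\;\leq\; \frac{\beta_t^{1/2}\sigma_{t-1}(\vec{x}_t)}{L}
\]
closes the proof (equivalently one may factor $1-\eta_m(\vec{x}_t)=\bigl(1-\eta_m(\vec{x}_t)^{1/2}\bigr)\bigl(1+\eta_m(\vec{x}_t)^{1/2}\bigr)$ and use $1+\eta_m(\vec{x}_t)^{1/2}\geq 1$).

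I do not expect a serious obstacle: the one genuine idea is the comparison against the mean-maximizer $\vec{x}^{\max}$, where the normalizer saturates at $\eta_m=1$, combined with dropping its variance term; everything afterward is algebra driven by the definition of $\eta_m$. The two points demanding care are the indexing of the range $L$ — it must be the range of $m_{t-1}$ that defines $\eta_m$, so the subscript in $L_t$ should be read as indexing the iteration $t$ rather than the posterior $m_t$ — and the implicit nondegeneracy assumption $L>0$, since a constant posterior mean makes $\eta_m$ ill-defined and must be excluded or handled as a trivial separate case.
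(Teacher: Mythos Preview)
Your proposal is correct and follows essentially the same approach as the paper: compare the acquisition value at $\vec{x}_t$ against the mean-maximizer (where $\eta_m=1$), rearrange, divide by the range $L$, and use $0\le\eta_m\le1$ to pass from $1-\eta_m$ to $1-\eta_m^{1/2}$ and to drop the factor $\eta_m^{1/2}$ on the right. Your exposition is in fact slightly cleaner than the paper's, and your remarks about the $L_t$ versus $m_{t-1}$ indexing and the nondegeneracy $L>0$ are well taken.
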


\begin{proof} \renewcommand{\qedsymbol}{}
Set $m_{t-1}(\vec{x}^m)=max(m_{t-1}(\vec{x}))$, $\forall \vec{x} \in D$, according to GP-ACB, $\eta_m(\vec{x}_t)^{1/2}\beta_t^{1/2}\sigma_{t-1}(\vec{x}_t)+m_{t-1}(\vec{x}_t)\geq\eta_m(\vec{x}^m)^{1/2}\beta_t^{1/2}\sigma_{t-1}(\vec{x}^m)+m_{t-1}(\vec{x}^m)$, then
\begin{equation*}
\begin{array}{lll}
m_{t-1}(\vec{x}^m)-m_{t-1}(\vec{x}_{t-1})\leq \eta_m(\vec{x}_t)^{1/2}\beta_t^{1/2}\sigma_{t-1}(\vec{x}_t)-\eta_m(\vec{x}^m)^{1/2}\beta_t^{1/2}\sigma_{t-1}(\vec{x}^m)\\
\Rightarrow 1-\eta_m(\vec{x}_t)^{1/2}\leq \frac{\beta_t^{1/2}}{L_t}(\eta_m(\vec{x}_t)^{1/2}\sigma_{t-1}(\vec{x}_t)-\eta_m(\vec{x}^m)^{1/2}\sigma_{t-1}(\vec{x}^m))\\
\qquad\qquad\quad  \leq\frac{\beta_t^{1/2}}{L_t}(\eta_m(\vec{x}_t)^{1/2}\sigma_{t-1}(\vec{x}_t)\leq \beta_t^{1/2}\sigma_{t-1}(\vec{x}_t)/L_t 
\end{array}
\end{equation*}
\end{proof}

\begin{remark}
\label{remark lemma 4}
Lemma \ref{Lemma 4} shows that when the scale of the function $L_t$ is large, $\eta_m(\vec{x})$ will be close to 1, meaning the GP-ACB algorithm degrades to GP-UCB. Conversely, when the scale of function is large, $m_t(\vec{x})$ will play a more important role in the search process, driving the algorithm to be very greedy. To investigate the effects of the scaling function on the optimization performance, a numerical experiment has been conducted and the results are shown in Section 5.1.
\end{remark}

Define $\gamma_{T}$ as the maximum information gain after $T$ rounds as follows \cite{Niranjan2009}:
\begin{equation*}
\gamma_{T} = \max_{T' \leq T} \frac{1}{2} \sum_{t=1}^{T'} \log (1+\sigma^{-2}\sigma_{t-1}^2(\vec{x}_t))
\end{equation*}
According to  \cite{Niranjan2009}, if the search space $D \in \mathbb{R}^d$ is compact and convex, where $d$ is the dimension of the search space, with the assumption that the kernel function satisfies $k(\vec{x},\vec{x}^{'}) \leqslant 1$, we have 
\begin{itemize}
\item $\gamma_T=\mathcal{O}((\log T)^{d+1})$ for Gaussian kernel;
\item $\gamma_T=\mathcal{O}\left( T^{d(d+1)/(2\nu+d(d+1))}(\log T)\right)$ for Mat\'{e}rn kernels with $\nu > 1$.
\end{itemize}
We can finally obtain the following bound for the GP-ACB algorithm.

\begin{theorem}
\label{theorem 1}
Let $\delta\in (0,1)$, $\beta_t=2\log(|D|t^2\pi^2/6\delta)$ , $m = \min_{t=(1,\cdots,T)} (\eta_m(\vec{x}_t))$ and $n = \max_{t=(1,\cdots,T)} (\eta_m(\vec{x}_t))$. Running GP-ACB results in a regret bound as follows
\begin{equation}
\label{eq:regret_bound}
Pr\lbrace R_T\leq\sqrt{nC_{1}T\beta_{T}\gamma_{T}}, \forall T\geq1\rbrace\geq 1-\delta^m,
\end{equation}
where $C_1=8/\log(1+\sigma^{-2})$.
\end{theorem}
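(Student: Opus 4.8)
The plan is to mirror the classical GP-UCB regret analysis of \cite{Niranjan2009}, but to carry the adaptive factor $\eta_m$ through every step so that it produces the $\sqrt{n}$ saving in the rate and the $\delta^m$ in the confidence level. First I would instantiate Lemma \ref{Lemma 1} with the choice $\pi_t=\pi^2 t^2/6$, so that $\sum_{t\geq 1}\pi_t^{-1}=1$ and $\beta_t=2\log(|D|t^2\pi^2/6\delta)$, which is exactly the value appearing in the theorem. This delivers the event
\begin{equation*}
\mathcal{E}: \quad |f(\vec{x})-m_{t-1}(\vec{x})|\leq \eta_m(\vec{x})^{1/2}\beta_t^{1/2}\sigma_{t-1}(\vec{x}), \quad \forall \vec{x}\in D,\ \forall t\geq 1.
\end{equation*}
Since $\delta\in(0,1)$ makes $\delta^x$ decreasing and $m=\min_t \eta_m(\vec{x}_t)\leq \eta_m(\vec{x}_t)$ for every $t$, taking the worst-case (smallest) exponent $m$ gives the safe uniform guarantee, so $\mathcal{E}$ holds with probability at least $1-\delta^m$. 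Everything after this point is carried out deterministically on $\mathcal{E}$.

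Conditioned on $\mathcal{E}$, I would reuse the tight intermediate inequality established inside the proof of Lemma \ref{Lemma 2}, namely $r_t \leq 2\eta_m(\vec{x}_t)^{1/2}\beta_t^{1/2}\sigma_{t-1}(\vec{x}_t)$, rather than its loosened form. Squaring gives $r_t^2 \leq 4\eta_m(\vec{x}_t)\beta_t\sigma_{t-1}^2(\vec{x}_t)$, and then bounding $\eta_m(\vec{x}_t)\leq n$ together with $\beta_t\leq \beta_T$ (the latter because $\beta_t$ is nondecreasing in $t$) yields $r_t^2 \leq 4n\beta_T\sigma_{t-1}^2(\vec{x}_t)$. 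This is the single place where the adaptive factor enters the rate: it is precisely the step $\eta_m(\vec{x}_t)\leq n$ that replaces the constant $1$ of GP-UCB and produces the $\sqrt{n}$ in \eqref{eq:regret_bound}.

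Next I would convert predictive variances into information gain. Using the kernel assumption $k(\vec{x},\vec{x}')\leq 1$ we have $\sigma^{-2}\sigma_{t-1}^2(\vec{x}_t)\leq \sigma^{-2}$, and since $s\mapsto s/\log(1+s)$ is monotone on $[0,\sigma^{-2}]$ this gives $\sigma_{t-1}^2(\vec{x}_t)\leq \frac{\sigma^{-2}}{\log(1+\sigma^{-2})}\,\sigma^2\log\!\big(1+\sigma^{-2}\sigma_{t-1}^2(\vec{x}_t)\big)$, where the $\sigma^{-2}\sigma^2$ factor cancels. Summing over $t$ and recognizing $\tfrac12\sum_{t=1}^T \log(1+\sigma^{-2}\sigma_{t-1}^2(\vec{x}_t))\leq \gamma_T$ produces $\sum_{t=1}^T r_t^2 \leq n\,C_1\,\beta_T\,\gamma_T$ with $C_1=8/\log(1+\sigma^{-2})$. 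A final Cauchy--Schwarz step $R_T=\sum_{t=1}^T r_t \leq \big(T\sum_{t=1}^T r_t^2\big)^{1/2}$ then closes the argument as $R_T\leq \sqrt{nC_1 T\beta_T\gamma_T}$ on $\mathcal{E}$, which is the claimed bound.

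The hard part, I expect, is the probabilistic bookkeeping rather than the algebra. The quantities $m$ and $n$ are defined through the realized queries $\vec{x}_t$, which themselves depend on the random draw of $f$, so I must be careful that ``holds with probability $\geq 1-\delta^m$'' is read as a valid lower bound obtained by inserting the worst-case exponent $m=\min_t\eta_m(\vec{x}_t)$ into Lemma \ref{Lemma 1}, and that it is the $\pi_t$-weighted union bound that makes the uniform-in-$t$ guarantee survive. Once this worst-case/measurability reading is fixed, the conditioning on $\mathcal{E}$ makes the remainder purely deterministic, and the conclusion follows from monotonicity of $\beta_t$, the kernel bound, and Cauchy--Schwarz.
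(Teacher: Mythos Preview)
Your proposal is correct and follows essentially the same route as the paper: instantiate Lemma~\ref{Lemma 1} with $\pi_t=\pi^2 t^2/6$ to obtain the high-probability event, use the intermediate bound $r_t\leq 2\eta_m(\vec{x}_t)^{1/2}\beta_t^{1/2}\sigma_{t-1}(\vec{x}_t)$ from Lemma~\ref{Lemma 2}, square and replace $\eta_m(\vec{x}_t)\leq n$ and $\beta_t\leq\beta_T$, convert variances to $\log(1+\sigma^{-2}\sigma_{t-1}^2(\vec{x}_t))$ via the kernel bound, sum to $nC_1\beta_T\gamma_T$, and finish with Cauchy--Schwarz. The only cosmetic difference is that the paper also cites Lemma~\ref{Lemma 3} alongside Lemma~\ref{Lemma 1}, whereas you (correctly) observe that Lemma~\ref{Lemma 1} alone already supplies the uniform-in-$D$ confidence bound needed; your explicit conditioning on the event $\mathcal{E}$ and your remark about the data-dependence of $m,n$ are, if anything, more careful than the paper's presentation.
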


\begin{proof} \renewcommand{\qedsymbol}{}
Define the information gain $I$ as follows:
\begin{equation*}
\label{information gain}
I(\textbf{y}_T,\textbf{f}_T)=\frac{1}{2}\sum_{t=1}^{T}\log(1+\sigma^{-2}\sigma_{t-1}^2(\vec{x}_t)),
\end{equation*}
where $\textbf{\textsl{f}}_T=(f(\vec{x}_1),\dots,f(\vec{x}_T))' \in \mathbb{R}^T$. According to Lemma $\ref{Lemma 1}$ and Lemma $\ref{Lemma 3}$, the regret bound $\lbrace r_t^2\leq 4\eta_m(\vec{x}_t)\beta_t\sigma_{t-1}^2(\vec{x}_t)$, $\forall t \geq 1\rbrace$ holds with probability $\geq 1-\delta^{\eta_m(\vec{x}_t)}\geq 1-\delta^m$. As $\beta_t $ is non-decreasing, we have 
\begin{equation}
\label{Regret bounds}
4\eta_m(\vec{x})\beta_t\sigma_{t-1}^2(\vec{x}_t)\leq 4n\beta_T\sigma^2(\sigma^{-2}\sigma_{t-1}^2(\vec{x}_t)\leq 4n\beta_T \sigma^2 S \log(1+\sigma^{-2}\sigma_{t-1}^2(\vec{x}_t))
\end{equation}
where $S=\sigma^{-2}/\log(1+\sigma^{-2})$, since $\sigma^{-2}\sigma_{t-1}^2(\vec{x}_t)\leq \sigma^{-2}k(\vec{x}_t,\vec{x}_t) \leq \sigma^{-2}$, \\$C_1=8/\log(1+\sigma^{-2})\geq 8\sigma^2$ and $h^2\leq S\log(1+h^2) for\quad h \in [0,\sigma^{-2}]$. As $C_1=8\sigma^2S$, for $T\geq 1$ we have
\begin{equation*}
\begin{array}{ll}
\sum_{t=1}^T r_t^2\leq\sum_{t=1}^T 4\eta_m(\vec{x})\beta_t\sigma_{t-1}^2(\vec{x}_t) \leq n\sum_{t=1}^T\frac{1}{2}\beta_T C_1 \log(1+\sigma^{-2}\sigma_{t-1}^2(\vec{x}_t))\\
\qquad\quad\leq nC_1\beta_T\gamma_T.
\end{array}
\end{equation*}
According to Cauchy-Schwarz inequality, $R_T^2\leq T\sum_{t=1}^T r_t^2$. Theorem \ref{theorem 1} has been proven.
\end{proof}

\begin{remark}
\label{remark for regret bounds}
The regret bound of the GP-UCB algorithm is \cite{Niranjan2009} 
\begin{equation*}
Pr\lbrace R_T\leq\sqrt{C_{1}T\beta_{T}\gamma_{T}}, \forall T\geq1\rbrace\geq 1-\delta.
\end{equation*}
With the same parameter setting, the regret bound of our GP-ACB algorithm is shown as Eqn. (\ref{eq:regret_bound}). Based on the bound of $\eta(\vec{x)}$ shown in Lemma \ref{Lemma 4} and the definition of the information gain $\gamma_T$, we know that $\sigma_{t-1}(\vec{x}_t)$ is close to 0 when $T$ is large, which means $m$ and $n$ in Theorem 1 are close to one when iteration $T$ is large and are close to zero when iteration $T$ is small. This indicates that GP-ACB algorithm is greedy at the beginning and degrades to GP-UCB algorithm when iteration $T$ is large. Since $0 < m,n \leqslant 1$, we can get the conclusion that the GP-ACB algorithm can get the same regret bound more efficient than that of the GP-UCB algorithm. The regret bound can be easily translated into convergence rate. The maximum $\max_{t \leq T} f(\vec{x}_t)$ in the first $T$ iterations is no further from $f(\vec{x}^*)$, where $\vec{x}^*$ is the global optimum, than the average regret $R_T/T$. Thus, compared with GP-UCB, on average, the GP-ACB algorithm has a higher or equal convergence rate. 
\end{remark}

\section{Case Studies}
\label{caseStudies}

To demonstrate the capability of GP-ACB, in this section, we first use it to solve a global optimization problem with Ackley's function as the target function. The performance of GP-ACP is compared with those of other active learning algorithms. We then use GP-ACB to solve a full scale requirement mining problem of an automatic transmission system, a benchmark used widely in CPS community \cite{jin2013mining,jin2014powertrain,sankaranarayanan2012falsification}. In both cases, GP-ACB outperforms other active learning algorithms as well as some other state-of-art optimization algorithms such as Nelder-Mead. 

\subsection{Global Optimization of Ackley's Function}

To verify the performance of the proposed GP-ACB algorithm, we compare the GP-ACB algorithm with four types of Gaussian-Process-based strategy: (i) GP-UCB active learning, (ii) Batch-greedy UCB active learning \cite{Desautels2014}, (iii) pure exploration, i.e., choosing points of maximum variance at each step, and (iv) pure exploitation or greedy, i.e., choosing points of maximum mean at each step. We use Ackley's function (shown in Fig. 2.(a)) as the target function. Its formula is as follows,
\begin{equation*}
f(x,y)=-20 e^{-0.2\sqrt{0.5(x^2+y^2)}}-e^{0.5(\cos(2\pi x)+\cos(2\pi y))}+e+20,
\end{equation*}
where $e$ is the observation noise with zero mean and variance $\sigma^{2}$ at 0.025. The search space $D=[-5,5]^{2}$ is randomly discretized into 1000 points. We run each algorithm for $T=58$ iterations with sampling time $\delta=0.1$. Since the global minimum of the Ackley's function $(x^*,y^*)$ is known (unknown to the learning algorithms though), for the $i$-th trial, if $(x_t^i,y_t^i)$ is the solution obtained by running the algorithm for $t$ iterations, then mean regret for the algorithm at time $t$ is $\bar{R}_t = \sum_{i=0}^{N_t} [f(x_t^i,y_t^i)-f(x^*,y^*)]/N_t$, where $N_t$ is the number of trials. In this case study, we set $N_t=1000$. Each trial is initialized randomly. 

\begin{figure}[!htb]
\label{fig: comparison of regret}
\centering
\subfigure[]{\includegraphics[width=0.5\textwidth]{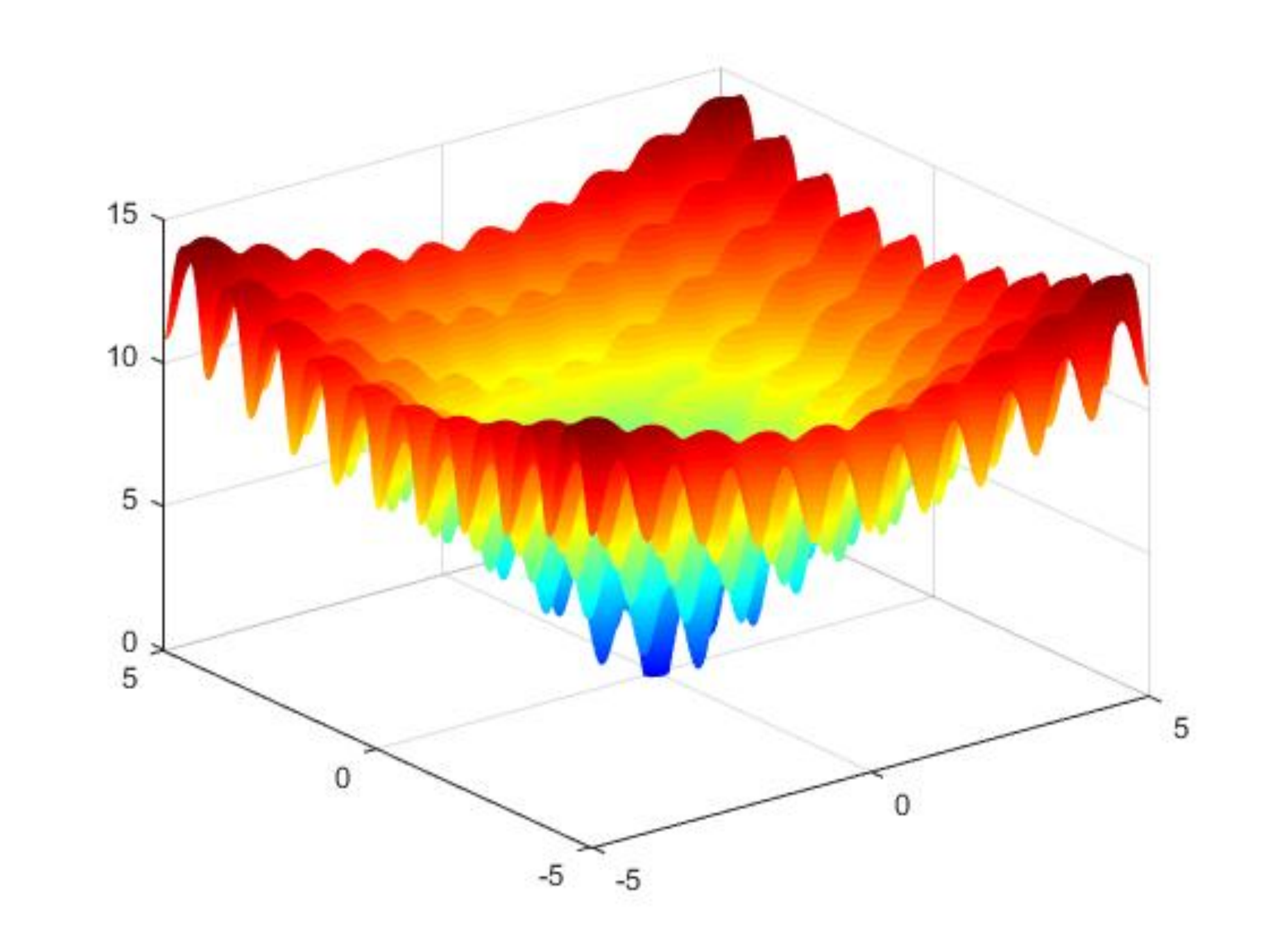}}\\
\subfigure[]{\includegraphics[width=0.48\textwidth]{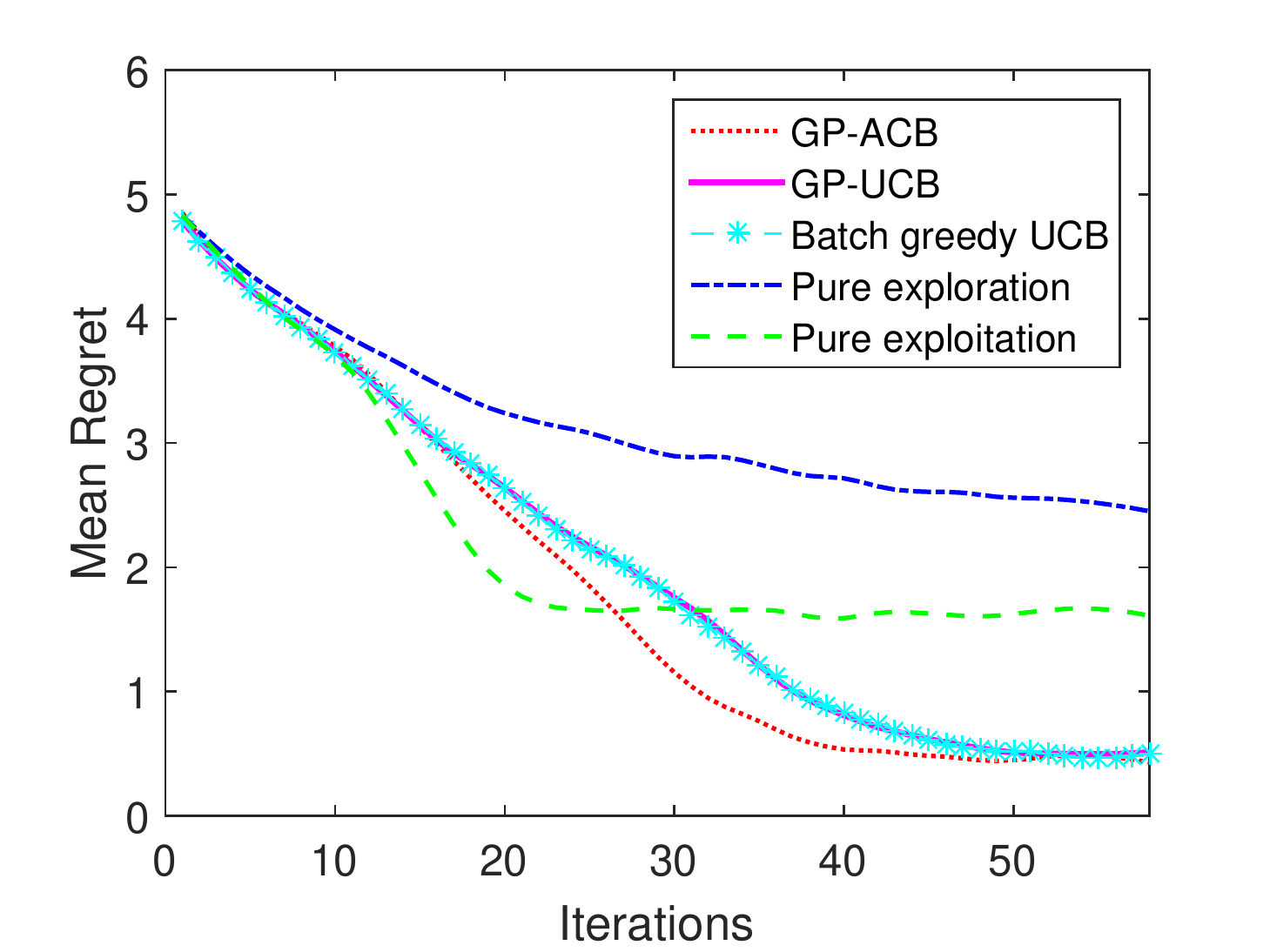}}
\subfigure[]{\includegraphics[width=0.48\textwidth]{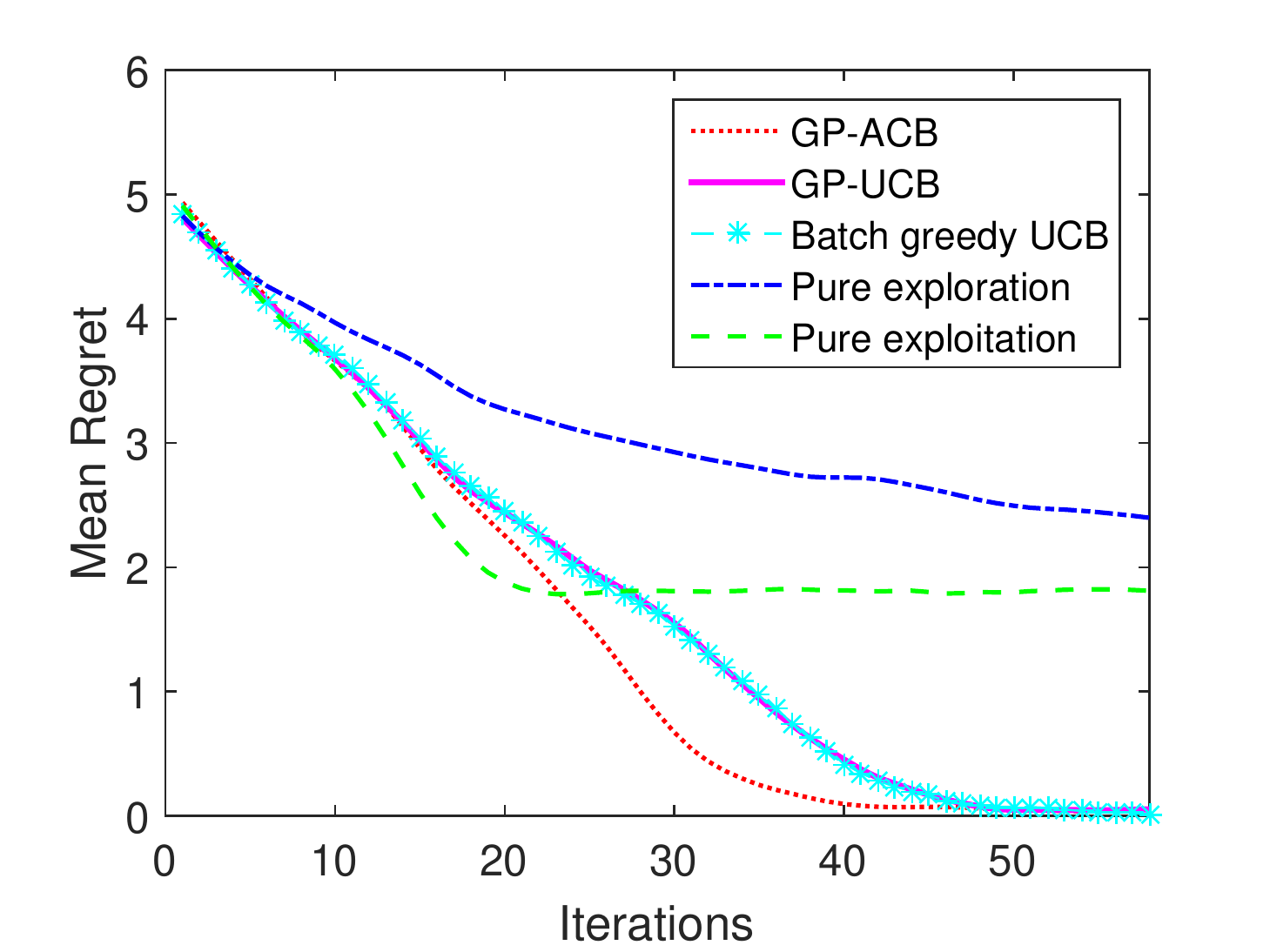}}
\caption{(a) Ackley's function. (b) and (c) compares the performance of GP-ACB with those of other strategies. The mean regret over 1000 trails $\bar{R}(t):= \bar{R}_t = \sum_{i=0}^{1000} [f(x_t^i,y_t^i)-f(x^*,y^*)]/1000$ is chosen as the performance metric. (b) shows the comparison results with Gaussian kernel. (c) shows the comparison results with Mat\'{e}rn kernel.}
\end{figure}

Fig. 2(b) and Fig. 2(c) show the comparison of the mean regret $\bar{R}_t$ incurred by the different Gaussian Process based algorithms with Gaussian kernel and Mat\'{e}rn kernel, respectively. With both kernels, GP-ACB outperforms the others. For instance, GP-UCB arrives at its minimum regret in an average of 58 iterations; while GP-ACB arrives at its minimum regret in an average of 45 iterations. Another interesting observation is that pure exploitation (greedy) strategy on average converges quicker than others. But the regret it converges to is on average higher than others, implying the convergence to local minimums. As for kernels, for this particular case, Mat\'{e}rn kernel outperforms the Gaussian kernel. This is not supervising, given that the Ackley's function (as shown in Fig. 2(a)) is quite ``non-smooth'' and Mat\'{e}rn kernel is designed to capture non-smoothness. 

\begin{figure}[!htb]
\label{fig:comparison of scaling factors}
\centering
\subfigure[]{\includegraphics[width=0.48\textwidth]{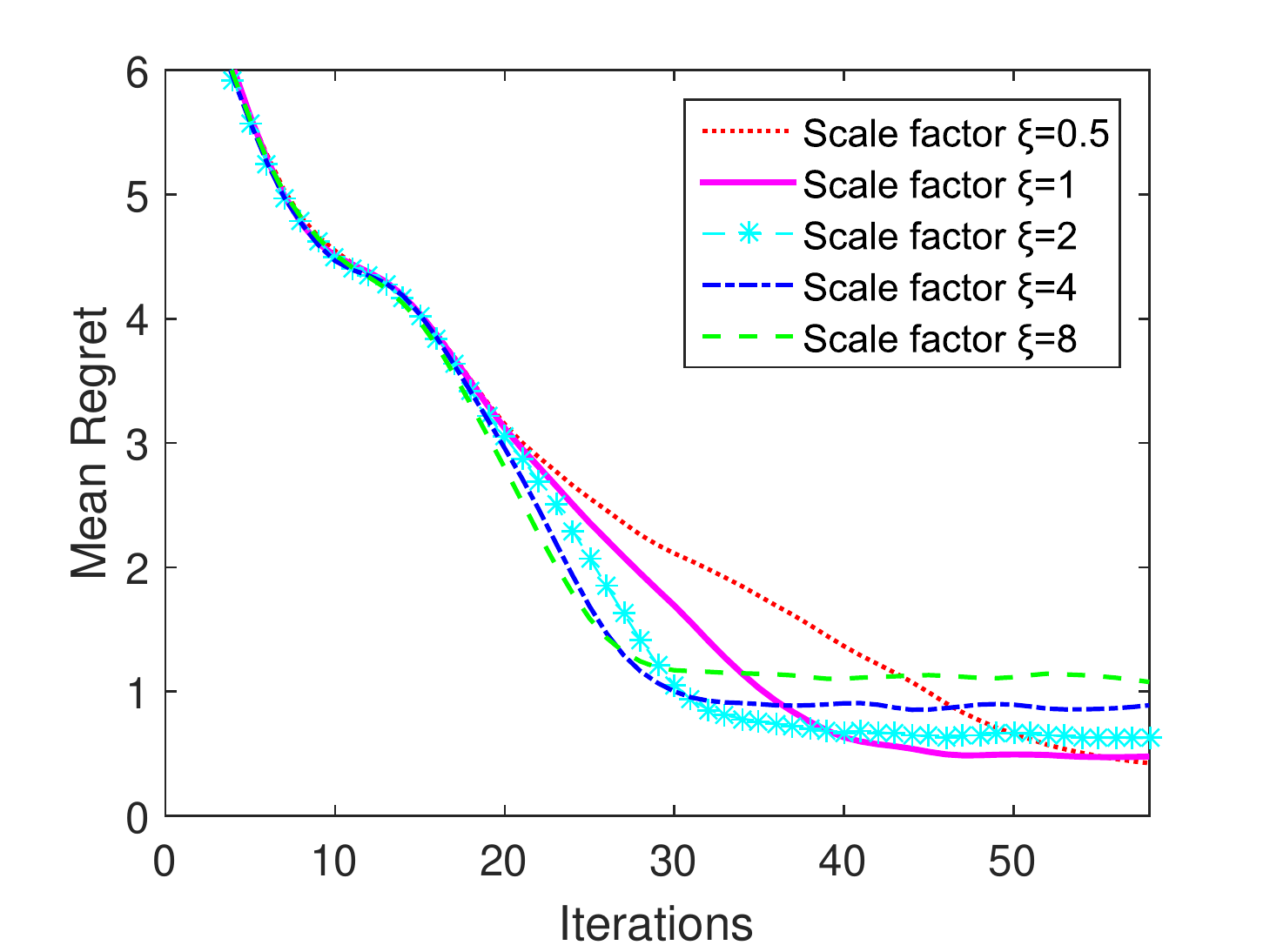}}
\subfigure[]{\includegraphics[width=0.48\textwidth]{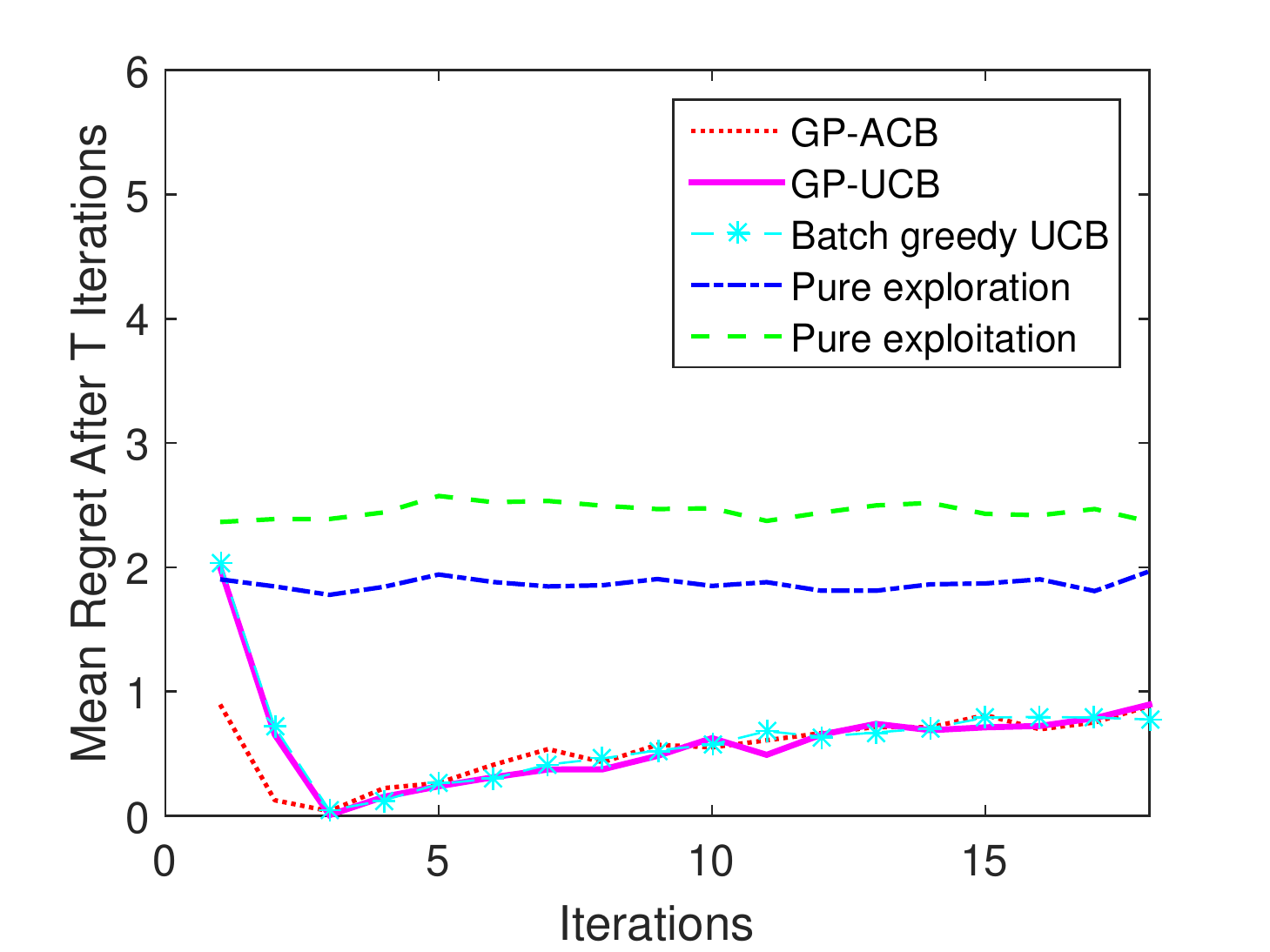}}
\caption{(a) The performance of the GP-ACB's algorithm with respect to the scaling factor $\xi$. The mean regret over 1000 trails $\bar{R}(t):= \bar{R}_t = \sum_{i=0}^{1000} [f(x_t^i,y_t^i)-f(x^*,y^*)]/1000$ is chosen as the performance metric. (b) The effects of the scaling factor $\xi$ on different algorithms. The mean regret at time $T$ over 1000 trails $\sum_{i=0}^{1000} [f(x_T^i,y_T^i)-f(x^*,y^*)]/1000$ is chosen as the performance metric. Here $T = 58$.}
\end{figure}

Remark \ref{remark lemma 4} shows that the range of the mean function $m(\vec{x})$ at time $t$, 
\begin{equation*}
L_t=max(m_t(\vec{x}))-min(m_t(\vec{x})),
\end{equation*}
affects the performance of the GP-ACB algorithm. When the range $L_t$ is large, the GP-ACB algorithm degrades to the GP-UCB algorithm. In this paper, we use a scale factor $\xi$ to change the range of the reward or cost function. Instead of optimizing $f(\vec{x})$ directly, we scale it by $\xi$ first and then solve the optimization problem with a new reward or cost function $\xi f(\vec{x})$. To demonstrate the effects, Fig. 3(a) shows the performance of GP-ACB with different scaling factors. It shows that when the scaling factor $\xi$ is large, the algorithm prefers exploitation over exploration and has the danger of trapping in a local optimum; when the scaling factor $\xi$ is small, the algorithm prefers exploration over exploitation and it may take more than necessary time to explore irrelevant regions. Thus, a proper $\xi$ needs to be selected to balance the trade-off between exploration and exploitation. Fig. 3(b) shows the effects of the scaling factor $\xi$ on different algorithms. It shows that when the scaling factor $\xi$ is large, the GP-UCB and GP-ACB algorithms have the same mean regret after 58 iterations, implying GP-ACB has degraded to GP-UCB. The pure exploration and pure exploitation algorithms' performance are not effected by the scaling factor $\xi$.

\subsection{Requirement Mining on an Automatic Transmission Model}

The model used in the following study is the same as the one used in \cite{jin2013mining}, a closed-loop model of a four-speed automatic transmission, shown in Fig 4. The model contains all necessary mechanical components, including engine, transmission and the longitudinal chassis dynamics. In the model, the throttle  position and brake torque are the input signals. With the current gear selection, the transmission ratio ($Ti$) can be computed through the transmission block, and the output torque can be obtained with the engine speed ($Ne$), gear status and transmission RPM.

\begin{figure}[!htb]
\label{system model}
\centering
\includegraphics[width=0.8\textwidth]{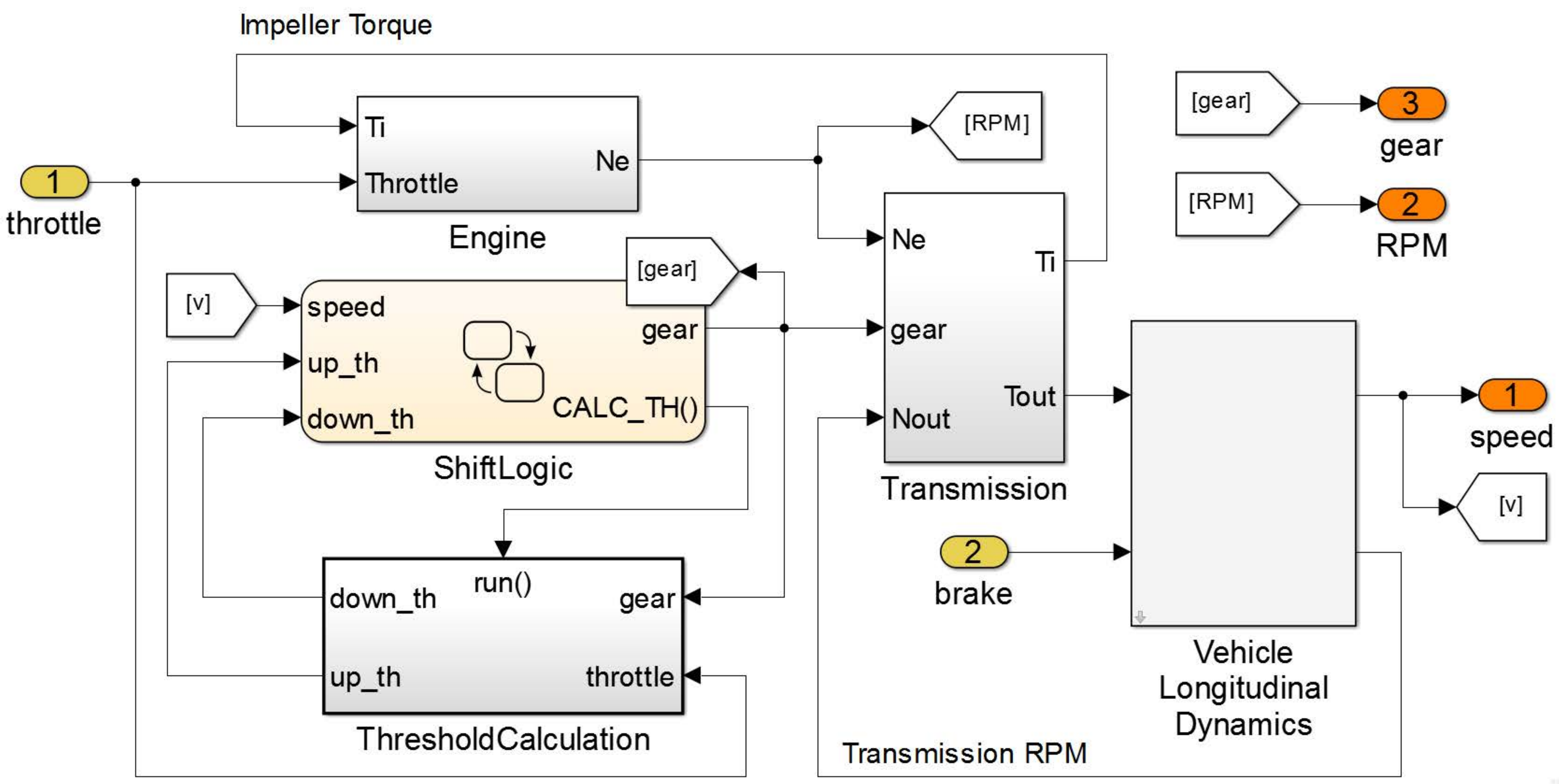}
\caption{Closed-loop Simulink model of an automatic transmission and controller, whose exogenous inputs are the throttle position and brake torque.}
\end{figure}

We tested three different template requirements which are the same with \cite{jin2013mining}, thus a comparison study can be conducted. The three template requirements are as follows:
\begin{enumerate}
\item Requirement $\varphi_{sp\_rpm}(\pi_1,\pi_2)$, which has a strong correlation with safety requirements characterizing the operating region for the engine parameters \textit{speed} and \textit{RPM}, specifying that always the \textit{speed} is below $\pi_1$ and \textit{RPM} is below $\pi_2$. 
\begin{equation*}
G((speed<\pi_1)\wedge(RPM<\pi_2)).
\end{equation*}
\item Requirement $\varphi_{rpm100}(\pi ,\tau)$, which measures the performance of the closed loop system. The value for $\tau$ specifies how fast the vehicle can reach a certain speed, and the value for $\pi$ can specify the lowest \textit{RPM} needed to reach the above speed. The formula specifies that the vehicle cannot reach the speed of 100 mph in $\tau$ seconds with \textit{RPM} always below $\pi$:
\begin{equation*}
\neg(F_{[0,\tau]}(speed>100)\wedge G(RPM<\pi)).
\end{equation*}
\item Requirement $\varphi_{stay}(\tau)$, which encodes undesirable transient shifting of gears, specifying that whenever the system shifts to gear 2, it dwells in gear 2 for at least $\tau$ seconds:
\begin{equation*}
G \left( \left( gear\neq 2\wedge F_{[0,\varepsilon]}gear=2\right) \Rightarrow G_{[0,\tau]}gear=2\right).
\end{equation*}
\end{enumerate}

Our requirement mining algorithm as shown in Fig. 1 and elaborated in Section 3.3 is implemented to mind requirements for the automatic transmission model. Our algorithm is based on the Breach toolbox \cite{donze2010breach} with the falsification problem solved by GP-ACB. The Breach toolbox uses the Nelder-Mead algorithm for falsification \cite{donze2010breach,jin2013mining}. Section 5.1 shows that the scaling factor affects the performance of the GP-ACB algorithm. We first run a few test trials in order to find the optimal scaling factor, the results of which are shown in Table 1. It can be seen that when scaling factor is set to 0.5, the time spent on falsification, the time spent on parameter synthesis and the number of simulations are all at their lowest values. Thus, the scaling factors are set to 0.5 for all the following experiments.  

\begin{table}[!htb]
\centering  
\label{scale factor}
\caption{Requirement mining results for GP-ACB with different scaling factors. In this table and Table 2, ``Parameter Values'' are the mined parameter values. For instance, parameter values $(4849, 155)$ together with the formula template $\varphi_{sp\_rpm}(\pi_1,\pi_2)$ means that the mined formula is $G((speed<4849)\wedge(RPM<155))$. ``Fals.(s)'' is the CPU time spent on solving the falsification problem. ``Synth.(s)'' is the CPU time spent on solving the parameter synthesis problem. ``$\#$Sim.'' is the number of simulations.``Robustness'' is the robustness degree corresponding to the mined parameters.}
\begin{tabular}{c|ccccc}   
\hline
Scaling Factor &\multicolumn{5}{c}{Requirement Mining Results}\\     
$\varphi_{sp\_rpm}(\pi_1,\pi_2)$ &Parameter Values  &Fals.(s)  &Synth.(s)  &$\#$Sim.  &Robustness\\ 
\hline  %
0.1   &(4849, 155) &112	&16.1	&393	&0.1541\\   
0.25  &(4846, 155)&109	&17.4	&385	&0.3510\\          
0.5   &(4849, 155)&67	&15.4	&255	&0.7108\\       
0.75  &(4846, 155)&79	&15.6	&276	&0.8197\\      
1     &(4847, 155)&107	&16.6	&367	&1.5098\\
\hline
\end{tabular}
\end{table}

The comparison results, averaged over 20 randomly started trails, for GP-ACB with Gaussian kernel, GP-ACB with Mat\'{e}rn kernel, GP-UCB and Nelder-Mead are shwon in Table 2. The results show GP-ACB with Mat\'{e}rn kernel outperforms the others in terms of the times spent on falsification, the times spent on parameter synthesis and the numbers of simulations. For instance, in mining the formula $\varphi_{sp\_rpm}(\pi_1,\pi_2)$, GP-UCB saves on average 10\% of simulation number compared with Nelder-Mead, while GP-ACB with Mat\'{e}rn kernel saves up to about 40\%. The same level of improvement is observed for the other two requirements, $\varphi_{rpm100}(\pi ,\tau)$ with a 40\% improvement and $\varphi_{stay}(\pi)$ with a 30\% improvement.

\begin{table}[!htb]
\centering  
\label{requirement mining}
\caption{Requirement mining results for GP-ACB with Gaussian kernel, GP-ACB with Mat\'{e}rn kernel, GP-UCB with Mat\'{e}rn kernel, and Nelder-Mead algorithm.}
\begin{tabular}{l|ccccc}   
\hline
Template &\multicolumn{5}{c}{Requirement Mining Results}\\     
$\varphi_{sp\_rpm}(\pi_1,\pi_2)$ &Parameter Values  &Fals.(s)  &Synth.(s)  &$\#$Sim. &Robustness\\ \hline  %
GP-ACB(Gaussian)         &(4845, 155) &100 &15.6 &330   &1.2812\\   
GP-ACB(Mat\'{e}rn) &(4849, 155) &67	&15.4	&255	&0.7108\\          
GP-UCB(Mat\'{e}rn) &(4844, 155)	&97	&15.4	&334	&0.7425\\      
Nelder-Mead        &(4857, 155)&112&13.3	&380	&0.6738\\ \hline
$\varphi_{rpm100}(\pi ,\tau)$ &Parameter Values  &Fals.(s)  &Synth.(s)  &$\#S$im. &Robustness\\ \hline  %
GP-ACB(Gaussian)         &(5997, 12.20) &64 &2.0 &207 &0.1556\\   
GP-ACB(Mat\'{e}rn) &(5997, 12.20) &59 &1.9 &179 &0.0802\\          
GP-UCB(Mat\'{e}rn) &(5997, 12.20) &63 &2.0 &190 &0.1116\\      
Nelder-Mead        &(5997, 12.20) &104 &1.4 &340 &0.06453\\ \hline
$\varphi_{stay}(\tau)$ &Parameter Values  &Fals.(s)  &Synth.(s)  &$\#$Sim. &Robustness\\ \hline  %
GP-ACB(Gaussian)         &0.0586 &296 &10.6 &941 &0.05\\   
GP-ACB(Mat\'{e}rn) &0.0586 &294 &10.8 &940 &0.05\\          
GP-UCB(Mat\'{e}rn) &0.0586 &363 &10.6 &1056 &0.05\\      
Nelder-Mead        &0.0586 &422 &9.3  &1246 &0.1\\ \hline
\end{tabular}
\end{table}

\section{Conclusions and Future Work}
\label{conclusion}

In this paper, we introduced an active learning method, called Gaussian Process adaptive confidence bound (GP-ACB), for mining requirements of bounded-time temporal properties of cyber-physical systems. The theoretical analysis of the proposed algorithm showed that it had a lower regret bound thus a higher convergence rate than other Gaussian-Process-based active learning algorithms, such as GP-UCB. By using two case studies, one of which was an automatic transmission model, we showed that our requirement mining algorithm outperformed other existing algorithms, e.g., those based on GP-UCB or Nelder-Mead, by an average of 30\% to 40\%. Our results have significant implications for not only the requirement mining but also the validation and verification of cyber-physical systems. We are currently exploring the possibility of utilizing active learning to solve the structural inference problem, i.e., to mind a requirement without any given template. 

                                  
\bibliographystyle{abbrv}
\bibliography{references}     

\begin{thebibliography}{10}

\bibitem{abbas2011linear}
H.~Abbas and G.~Fainekos.
\newblock Linear hybrid system falsification through local search.
\newblock In {\em Automated Technology for Verification and Analysis}, pages
  503--510. Springer, 2011.

\bibitem{abbas2013probabilistic}
H.~Abbas, G.~Fainekos, S.~Sankaranarayanan, F.~Ivan{\v{c}}i{\'c}, and A.~Gupta.
\newblock Probabilistic temporal logic falsification of cyber-physical systems.
\newblock {\em ACM Transactions on Embedded Computing Systems (TECS)},
  12(2s):95, 2013.

\bibitem{alur2015CPS}
R.~Alur.
\newblock {\em Principles of cyber-physical systems}.
\newblock MIT Press, 2015.

\bibitem{annpureddy2011s}
Y.~Annpureddy, C.~Liu, G.~Fainekos, and S.~Sankaranarayanan.
\newblock {\em S-taliro: A tool for temporal logic falsification for hybrid
  systems}.
\newblock Springer, 2011.

\bibitem{Anwar2015}
H.~K. Anwar~S and S.~W.
\newblock Fixed point optimization of deep convolutional neural networks for
  object recognitio.
\newblock pages 1131--1135. IEEE, 2015.

\bibitem{asarin2012parametric}
E.~Asarin, A.~Donz{\'e}, O.~Maler, and D.~Nickovic.
\newblock Parametric identification of temporal properties.
\newblock In {\em Runtime Verification}, pages 147--160. Springer, 2012.

\bibitem{auer2009near}
P.~Auer, T.~Jaksch, and R.~Ortner.
\newblock Near-optimal regret bounds for reinforcement learning.
\newblock In {\em Advances in neural information processing systems}, pages
  89--96, 2009.

\bibitem{bartocci2014data}
E.~Bartocci, L.~Bortolussi, and G.~Sanguinetti.
\newblock Data-driven statistical learning of temporal logic properties.
\newblock In {\em Formal Modeling and Analysis of Timed Systems}, pages 23--37.
  Springer, 2014.

\bibitem{Desautels2014}
K.~A. Desautels~T and B.~J. W.
\newblock Parallelizing exploration-exploitation tradeoffs in gaussian process
  bandit optimization.
\newblock {\em The Journal of Machine Learning Research}, pages 3873--3923,
  2014.

\bibitem{donze2010breach}
A.~Donz{\'e}.
\newblock Breach, a toolbox for verification and parameter synthesis of hybrid
  systems.
\newblock pages 167--170. Springer Berlin Heidelberg, 2010.

\bibitem{donze2010robust}
A.~Donz{\'e} and O.~Maler.
\newblock Robust satisfaction of temporal logic over real-valued signals.
\newblock {\em Formal Modeling and Analysis of Timed Systems}, pages 92--106,
  2010.

\bibitem{dreossi2015efficient}
T.~Dreossi, T.~Dang, A.~Donz{\'e}, J.~Kapinski, X.~Jin, and J.~V. Deshmukh.
\newblock Efficient guiding strategies for testing of temporal properties of
  hybrid systems.
\newblock In {\em NASA Formal Methods}, pages 127--142. Springer, 2015.

\bibitem{fainekos2009robustness}
G.~E. Fainekos and G.~J. Pappas.
\newblock Robustness of temporal logic specifications for continuous-time
  signals.
\newblock {\em Theoretical Computer Science}, 410(42):4262--4291, 2009.

\bibitem{gosavi2015simulation}
Gosavi and Abhijit.
\newblock Simulation-based optimization: An overview.
\newblock pages 29--35. Springer US, 2015.

\bibitem{haghighi2015spatel}
I.~Haghighi, A.~Jones, Z.~Kong, E.~Bartocci, R.~Gros, and C.~Belta.
\newblock Spatel: a novel spatial-temporal logic and its applications to
  networked systems.
\newblock In {\em Proceedings of the 18th International Conference on Hybrid
  Systems: Computation and Control}, pages 189--198. ACM, 2015.

\bibitem{jiang2012cyber}
Z.~Jiang, M.~Pajic, and R.~Mangharam.
\newblock Cyber-physical modeling of implantable cardiac medical devices.
\newblock {\em Proceedings of the IEEE}, 100(1):122--137, 2012.

\bibitem{jin2014powertrain}
X.~Jin, J.~V. Deshmukh, J.~Kapinski, K.~Ueda, and K.~Butts.
\newblock Powertrain control verification benchmark.
\newblock In {\em Proceedings of the 17th international conference on Hybrid
  systems: computation and control}, pages 253--262. ACM, 2014.

\bibitem{jin2013mining}
X.~Jin, A.~Donz{\'e}, J.~V. Deshmukh, and S.~A. Seshia.
\newblock Mining requirements from closed-loop control models.
\newblock In {\em Proceedings of the 16th international conference on Hybrid
  systems: computation and control}, pages 43--52. ACM, 2013.

\bibitem{jones2014anomaly}
A.~Jones, Z.~Kong, and C.~Belta.
\newblock Anomaly detection in cyber-physical systems: A formal methods
  approach.
\newblock In {\em Decision and Control (CDC), 2014 IEEE 53rd Annual Conference
  on}, pages 848--853. IEEE, 2014.

\bibitem{kong2016TAC}
Z.~Kong, A.~Jones, and C.~Belta.
\newblock Temporal logics for learning and detection of anomalous behavior.
\newblock {\em IEEE Transactions on Automatic Control}, accepted.

\bibitem{kong2014temporal}
Z.~Kong, A.~Jones, A.~Medina~Ayala, E.~Aydin~Gol, and C.~Belta.
\newblock Temporal logic inference for classification and prediction from data.
\newblock In {\em Proceedings of the 17th international conference on Hybrid
  systems: computation and control}, pages 273--282. ACM, 2014.

\bibitem{lee2011introduction}
E.~A. Lee and S.~A. Seshia.
\newblock {\em Introduction to embedded systems: A cyber-physical systems
  approach}.
\newblock Lee \& Seshia, 2011.

\bibitem{maler2004monitoring}
O.~Maler and D.~Nickovic.
\newblock Monitoring temporal properties of continuous signals.
\newblock {\em Formal Techniques, Modelling and Analysis of Timed and
  Fault-Tolerant Systems}, pages 71--76, 2004.

\bibitem{rajkumar2010cyber}
R.~R. Rajkumar, I.~Lee, L.~Sha, and J.~Stankovic.
\newblock Cyber-physical systems: the next computing revolution.
\newblock In {\em Proceedings of the 47th Design Automation Conference}, pages
  731--736. ACM, 2010.

\bibitem{ramdas2013algorithmic}
A.~Ramdas and A.~Singh.
\newblock Algorithmic connections between active learning and stochastic convex
  optimization.
\newblock In {\em Algorithmic Learning Theory}, pages 339--353. Springer, 2013.

\bibitem{Rasmussen2006}
Rasmussen and C.~Edward.
\newblock {\em Gaussian processes for machine learning}.
\newblock The MIT Press, 2006.

\bibitem{sankaranarayanan2012falsification}
S.~Sankaranarayanan and G.~Fainekos.
\newblock Falsification of temporal properties of hybrid systems using the
  cross-entropy method.
\newblock In {\em Proceedings of the 15th ACM international conference on
  Hybrid Systems: Computation and Control}, pages 125--134. ACM, 2012.

\bibitem{Settles2010}
B.~Settles.
\newblock Active learning literature survey.
\newblock {\em University of Wisconsin, Madison}, 52(11):55--56, 2010.

\bibitem{Niranjan2009}
N.~Srinivas, A.~Krause, and S.~Kakade.
\newblock Gaussian process optimization in the bandit setting: No regret and
  experimental design.
\newblock {\em arXiv preprint arXiv}, pages 0912--3995, 2009.

\bibitem{srinivas2012information}
N.~Srinivas, A.~Krause, S.~M. Kakade, and M.~W. Seeger.
\newblock Information-theoretic regret bounds for gaussian process optimization
  in the bandit setting.
\newblock {\em Information Theory, IEEE Transactions on}, 58(5):3250--3265,
  2012.

\bibitem{zuliani2010bayesian}
P.~Zuliani, A.~Platzer, and E.~M. Clarke.
\newblock Bayesian statistical model checking with application to
  simulink/stateflow verification.
\newblock In {\em Proceedings of the 13th ACM international conference on
  Hybrid systems: computation and control}, pages 243--252. ACM, 2010.

\end{thebibliography}

\end{document}